\newcommand{\I}{\hspace{0.45cm}}
\newcommand{\II}{\I\I}
\newtheorem{thm}{Theorem}
\newtheorem{lem}[thm]{Lemma}
\newtheorem{cor}[thm]{Corollary}
\newtheorem{fact}[thm]{Fact}
\begin{document}

\title{Complexity dichotomy on partial grid recognition\thanks{An extended abstract of this paper was presented at ISCO 2010, the International Symposium on Combinatorial Optimization, to appear in Electronic Notes in Discrete Mathematics.}}

\author{
Vin\'icius G.~P.~de S\'a\thanks{%
DCC/IM, Univ.~Federal do Rio de Janeiro (UFRJ), Brazil.
Research supported by FAPERJ grant E-26/102.083/2009.}
\and
Guilherme D. da Fonseca\thanks{%
Univ.~Federal do Estado do Rio de Janeiro (UNIRIO), Brazil.
Research supported by FAPERJ grant E-26/110.091/2010.}
\and
Raphael Machado\thanks{%
Instituto Nacional de Metrologia, Normaliza\c{c}\~ao e Qualidade Industrial (INMETRO), Brazil.}
\and
Celina M. H. de Figueiredo\thanks{%
COPPE, Univ.~Federal do Rio de Janeiro (UFRJ), Brazil.
Research supported by FAPERJ grant E-26/102.706/2008 and CNPq grant 472144/2009-0.}
}

\maketitle

\begin{abstract}
Deciding whether a graph can be embedded in a grid using only unit-length edges is NP-complete, even when restricted to binary trees. However, it is not difficult to devise a number of graph classes for which the problem is polynomial, even trivial. A natural step, outstanding thus far, was to provide a broad classification of graphs that make for polynomial or NP-complete instances. We provide such a classification based on the set of allowed vertex degrees in the input graphs, yielding a full dichotomy on the complexity of the problem. As byproducts, the previous NP-completeness result for binary trees was strengthened to strictly binary trees, and the three-dimensional version of the problem was for the first time proven to be NP-complete. Our results were made possible by introducing the concepts of consistent orientations and robust gadgets, and by showing how the former allows NP-completeness proofs by local replacement even in the absence of the latter.
\end{abstract}

\section{Introduction}

A \emph{grid} $G_{M\times N}$ has vertex set $V(G_{M\times N})=\{(i,j): 1 \leq i \leq M, 1 \leq j \leq N\}$, and edge set $E(G_{M\times N})=\{(i,j)(k,l): |i-k|+|j-l|=1, (i,j),(k,l)\in V(G_{M\times N})\}$. Grids are often thought of in terms of their usual graphical representation, where vertices are the intersection points of lines that cross over each other in a regular pattern, as illustrated in Figure~\ref{figGridGraphs}(a). Grids are planar bipartite graphs.

A \emph{unit-length embedding} (or \emph{embedding}, for short, throughout the whole text) is a mapping from the vertex set of a graph $G$ to a subset of the points of a grid, along with an incidence-preserving assignment of the edges of $G$ to unit-length grid segments. We refer to such set of points and unit-length segments as a grid \emph{drawing}. Two embeddings are equal if they correspond to the same drawing, short of rotation, translation and reflection.

A \emph{partial grid} is any subgraph (not necessarily induced) of a grid, and can also be characterized as a graph that admits a unit-length embedding. Grid embeddings are widely studied due to applications in VLSI design~\cite{Ull} and simulation of parallel architectures~\cite{LMPPS03}. Unfortunately, deciding whether a graph admits a unit-length embedding is {NP-com\-plete}~\cite{IPL87}, even when restricted to binary trees~\cite{IPL89}. Indeed the so-called \emph{logic engine paradigm} for proving the NP-hardness of problems in Graph Drawing is described in~\cite{BETT99}, where the seminal references~\cite{IPL87,IPL89} and further applications~\cite{EW96,EW96b} are discussed. On the other hand, in the context of Graph Theory, the recognition of partial grid graphs is often stated as an open problem~\cite{graphclass,CM05}.

Let $G$ be a graph. The vertex and edge sets of $G$ are denoted $V(G)$ and $E(G)$, respectively, and $d_G(v)$ stands for the degree of vertex $v$ in $G$. Now let $D$ be a set of integers. We say $G$ is a $D$-\emph{graph} if, for all $v \in V(G)$, we have $d_G(v) \in D$, e.g.~paths are \{1,2\}-graphs, cycles are \{2\}-graphs, a complete graph on $n$ vertices is a \{$n-1$\}-graph etc. Figure~\ref{figGridGraphs}(b) illustrates a \{1,2,4\}-tree.

The \textsc{Partial-Grid Recognition} problem (\textsc{PGR}) asks whether a graph $G$ is a partial grid. In this paper, we establish the problem's complexity dichotomy into polynomial and NP-complete when the input is restricted to $D$-graphs, for every $D \subseteq \{1,2,3,4\}$, thus exhausting all possible sets whose elements can be found as vertex degrees in partial grid graphs. All graphs we consider are connected, since the problem can be solved independently for each connected component of a disconnected graph. Moreover, we will certainly use the facts that, (i) if the problem is NP-complete for $D$-trees, then it is also NP-complete for $D'$-trees, $D' \supset D$, and for $D$-graphs and $D'$-graphs---allowing cycles---as well (\emph{superset property}); and, analogously, if the problem is polynomial for $D$-graphs, then it is also polynomial for $D'$-graphs, $D' \subset D$, and for $D$-trees and $D'$-trees as well (\emph{subset property}).

In Section~\ref{secPrelim}, we revisit the seminal NP-completeness proofs and define the basic concepts for the sections to come.

Section~\ref{secCD} is the core of the present paper, addressing the complexity of each outstanding case---we either prove its NP-completeness, or state its triviality, or give a polynomial-time algorithm when applicable.

Additionally, motivated by recent advances in three-dimensional chip manufacturing~\cite{DANG08,SAKUMA08,SAKUMA09}, we consider the natural three-dimensional version of the problem in Section~\ref{sec3d}. We then illustrate the power of our techniques by proving simple theorems that settle the complexity classes of recognizing 3d partial grids for the vast majority of acceptable input degrees.

Section~\ref{secConclusion} closes the paper with concluding remarks and open problems.

\begin{figure}[!t]
\begin{center}

\setlength{\unitlength}{1mm}

\includegraphics[scale=0.69]{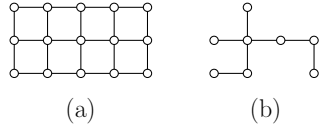}

\caption{\label{figGridGraphs}(a) The grid $G_{3,5}$. (b) Unit-length embedding for a \{1,2,4\}-tree.}

\end{center}
\end{figure}

\section{Consistent orientations and immersibility}\label{secPrelim}

Let $G$ be a graph. We say $f_G: E(G) \rightarrow \{0,1\}$ is a \emph{consistent orientation} for $G$ when it holds that, if $G$ is a partial grid, then there is an embedding for $G$ where every edge in $\{xy \in E(G): f_G(xy)=0\}$ is drawn horizontally, and every edge in $\{xy \in E(G): f_G(xy)=1\}$ is drawn vertically on the grid. Note that, if $G$ is not a partial grid, then any boolean function is a consistent orientation\linebreak for $G$.

We say two graphs $G_1,G_2$ have the same \emph{immersibility} if (i) both $G_1$ and $G_2$ are partial grids, or (ii) neither $G_1$ or $G_2$ is a partial grid.

In~\cite{IPL87}, Bhatt and Cosmadakis proved that deciding the existence of unit-length embeddings for arbitrary trees is NP-complete. Their proof was based on the reduction of the well-known NP-complete problem \textsc{Not-All-Equal} 3CNF SAT (not-all-equal conjunctive-normal-form satisfiability with 3 literals per clause) to the problem of deciding the existence of a unit-length embedding for a special \{1,2,4\}-tree they define, called the \emph{extended skeleton} (see Figure~\ref{figSkeleton}). This problem is referred to as the \textsc{Bhatt-Cosmadakis} problem.

\begin{figure}[!t]
\begin{center}

\setlength{\unitlength}{1mm}

\includegraphics[scale=0.69]{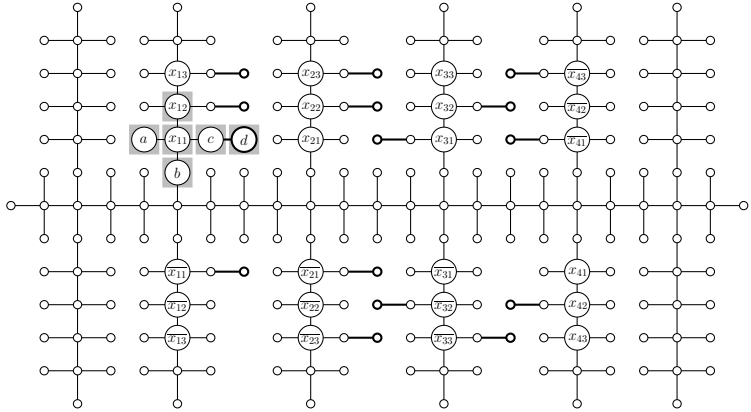}

\caption{\label{figSkeleton}Grid embedding for Bhatt and Cosmadakis's extended skeleton $S_\varphi$ associated to the 3CNF formula $\varphi = (\overline{x_2} \lor x_3 \lor \overline{x_4}) \land (x_1 \lor x_2 \lor x_4) \land (x_1 \lor \overline{x_3} \lor \overline{x_4})$. The existence of such embedding for $S_\varphi$ relates to the existence of a satisfying assignment for $\varphi$, namely $(x_1,x_2,x_3,x_4) = (T,T,T,F)$.}

\end{center}
\end{figure}

Though we will not give the details of such special tree here, the following fact is of utmost importance:

\begin{fact}\label{pro:known_orientations}
If $S_\varphi$ is an extended skeleton, then a consistent orientation for $S_\varphi$ can be determined in polynomial time.
\end{fact}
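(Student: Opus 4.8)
The plan is to exploit the rigid, recursive structure of the extended skeleton to assign an orientation (horizontal or vertical) to each edge, and then argue that this assignment is consistent in the sense required. The key observation is that a consistent orientation need not be derived from an actual embedding: by definition, I only need that \emph{if} $S_\varphi$ happens to be a partial grid, \emph{then} some embedding realizes the prescribed orientation of every edge. So the task is really to show that the skeleton is orientationally rigid up to the global reflection/rotation symmetry, i.e.\ that any two embeddings induce the same partition of edges into horizontal and vertical classes (after fixing the ambient axes). Once such rigidity is established, I can read off the orientation from any single hypothetical embedding---for instance the canonical one depicted in Figure~\ref{figSkeleton}---and declare that to be $f_{S_\varphi}$.

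First I would recall the explicit construction of the extended skeleton from~\cite{IPL87}: it is a $\{1,2,4\}$-tree built from a long ``spine'' path together with pendant substructures (the variable and clause gadgets) attached at degree-$4$ branch vertices. The plan is to process the tree in a canonical traversal starting from the spine. At each degree-$4$ vertex the four incident edges must occupy the four grid directions (up, down, left, right), so exactly two of them are horizontal and two are vertical; and because the spine is a path of degree-$2$ vertices forced to alternate turns or run straight in a prescribed way, its global orientation is fixed once we fix the orientation of its first edge. I would then propagate this forced choice outward along each pendant gadget, using at each step the local degree constraints to determine whether each new edge is forced to be horizontal or vertical relative to its parent edge. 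Crucially, the gadgets of the extended skeleton are designed precisely so that their embeddings are essentially unique up to symmetry; I would invoke this rigidity to conclude that the orientation of every edge is determined, not merely constrained.

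The main obstacle will be handling the genuine \emph{freedom} in the embedding that does not affect orientation. A spine or pendant segment that bends can often be reflected or ``wiggled'' locally, and I must verify that any such local move either is disallowed by the degree/length constraints or else preserves the horizontal/vertical class of every edge. Concretely, I would argue that whenever two edges meet at a degree-$2$ vertex, the skeleton's design forbids a straight continuation precisely where a turn is mandatory (and vice versa), so the alternation pattern---and hence the orientation parity---is invariant across all embeddings. The degree-$1$ leaves contribute no constraint of their own but inherit the orientation of their unique incident edge, so they pose no difficulty. The only subtlety is the global symmetry: a $90^\circ$ rotation swaps horizontal and vertical globally, but since we fix the axes once and for all (equivalently, fix the orientation of the very first spine edge), this ambiguity is removed and the induced partition becomes well defined.

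Finally, for the complexity claim I would observe that the entire procedure is a single pass over the tree: the skeleton has size polynomial in the number of variables and clauses of $\varphi$, the traversal visits each edge a constant number of times, and at each vertex the orientation is computed by a constant-size table lookup driven by the local degree and the parent edge's orientation. Hence $f_{S_\varphi}$ is produced in time linear in $|E(S_\varphi)|$, which is polynomial, establishing Fact~\ref{pro:known_orientations}.
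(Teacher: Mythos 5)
Your proposal is correct and follows essentially the same route as the paper: both hinge on the observations that a consistent orientation is only conditional on embeddability (so one never has to decide whether $S_\varphi$ embeds), that the Bhatt--Cosmadakis skeleton is orientationally rigid (a fact both proofs invoke from the construction rather than re-derive), and that the orientation can therefore be read off the canonical layout by a single polynomial-time structural pass---the paper via an explicit algorithm that identifies flags and the main/transversal spinal cords by degree patterns and labels them wholesale, you via a rooted traversal that propagates forced choices. One structural slip to repair: the main spinal cord is a path of degree-$4$ vertices (not degree-$2$), so at such a vertex the parent edge's orientation together with local degrees does not by itself single out which child edge is the collinear continuation; as in the paper's algorithm (which first extracts the maximal paths of degree-$4$ vertices and distinguishes the main cord as the one containing a vertex with two degree-$2$ neighbors), your traversal must identify the spinal cords globally before your constant-size table lookup becomes well defined.
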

\begin{proof}
An extended skeleton $S_\varphi$ comprises a subgraph $Z_\varphi$, called \emph{skeleton}, and a set of edges in $S_\varphi \setminus Z_\varphi$, called \emph{flags} (flags are shown in bold lines, in Figure~\ref{figSkeleton}). The skeleton is itself a partial grid which cannot accept two distinct embeddings, due to the rigidity granted by its main and transversal \emph{spinal cords} (the main spinal cord can be easily pinpointed in Figure~\ref{figSkeleton}---it comprises the long path of 4-degree vertices drawn in a straight horizontal line). The flags, on their turn, can only be embedded with the same orientation as the edges in the main spinal cord. On these grounds, the algorithm in Figure~\ref{algSphi} gives a consistent orientation for $S_\varphi$ in polynomial time, and Fact~\ref{pro:known_orientations} follows.
\end{proof}

\begin{figure}[!tb]\begin{flushleft}
\begin{sffamily}\medskip\setlength{\parindent}{0pt}\setlength{\parskip}{0pt}

\textbf{Extended\_Skeleton\_Consistent\_Orientation}
$(S_\varphi:~\textrm{extended skeleton})$

~

\begin{tabular}{ll}

\textbf{1.} & \textbf{for each} $uv$ \textbf{in} $E(S_\varphi)$ \textbf{do} \\

\textbf{1.1.} & \I $f_{S_\varphi}(uv) \gets -1$~~\emph{// mark the orientation of all edges as undefined}\\

\textbf{2.} & $F \gets \{uv \in E(S_\varphi): d_{S_\varphi}(u)=1,d_{S_\varphi}(v)=2\} $~~\emph{// flags}\\

\textbf{3.} & $Z_\varphi \gets S_\varphi \setminus F$~~\emph{// skeleton}\\

\textbf{4.} & $P \gets \{p: p~\textrm{is a maximal path of 4-degree vertices in}~Z_\varphi\}$\\

\textbf{5.} & \textbf{let}~$m$~be the only path in $P$ containing some vertex with two\\
            & ~~2-degree neighbors in $Z_\varphi$~~\emph{// main spinal cord}\\

\textbf{6.} & $T \gets P \setminus \{m\}$~~\emph{// transversal spinal cords}\\

\textbf{7.} & \textbf{for each} $uv$ \textbf{in} $m$ \textbf{do} \\

\textbf{7.1.} & \I $f_{S_\varphi}(uv) \gets 0$~~\emph{// main spinal cord oriented horizontally}\\

\textbf{8.} & \textbf{for each} $t$ \textbf{in} $T$ \textbf{do} \\

\textbf{8.1.} & \I \textbf{for each} $uv$ \textbf{in} $t$ \textbf{do} \\

\textbf{8.1.1} & \II $f_{S_\varphi}(uv) \gets 1$~~\emph{// transversal spinal cords oriented vertically}\\

\textbf{9.} & \textbf{for each} $uv$ \textbf{in} $Z_\varphi$ s.t.~$d_{Z_\varphi(u)}=2$ \textbf{do} \\

\textbf{9.1.} & \I $f_{S_\varphi}(uv) \gets 1$~~\emph{// edges connecting transversal to main spinal cords}\\
              & \I \phantom{$f_{S_\varphi}(uv) \gets 1$~~}\emph{// oriented vertically} \\

\textbf{10.} & \textbf{for each} $uv$ \textbf{in} $E(Z_\varphi)$ s.t.~$d_{Z_\varphi}(u)=4, f_{S_\varphi}(uv)=-1$ \textbf{do} \\

\textbf{10.1.} & \I \textbf{if} there exist $w,z \in V(Z_\varphi)$ s.t.~$f_{S_\varphi}(uw)=0, f_{S_\varphi}(uz)=0$ \textbf{then} \\

\textbf{10.1.1} & \II $f_{S_\varphi}(uv) \gets 1$~~\emph{// at most two horizontal edges allowed!}\\

\textbf{10.2.} & \I \textbf{else}\\

\textbf{10.2.1} & \II $f_{S_\varphi}(uv) \gets 0$\\

\textbf{11.} & \textbf{for each} $uv$ \textbf{in} $F$ \textbf{do} \\

\textbf{11.1.} & \I $f_{S_\varphi}(uv) \gets 0$~~\emph{// flags oriented horizontally}\\

\textbf{12.} & \textbf{return} $f_{S_\varphi}$

\end{tabular}

\end{sffamily}\end{flushleft}

\caption{\label{algSphi}Algorithm to determine consistent orientations for extended skeletons.}

\end{figure}

The seminal proof of Bhatt and Cosmadakis shows it is NP-complete to decide whether an extended skeleton is a partial grid, hence \textsc{PGR} is NP-complete for \{1,2,4\}-trees and, consequently, for \{1,2,3,4\}-trees.

The NP-completeness for \{1,2,3\}-trees (binary trees) was demonstrated by Gregori~\cite{IPL89}, who conceived an ingenious binary tree called the \emph{U-tree}. U-trees can be linked to one another by an edge between two of their vertices. Such vertices can be selected among four special vertices (in each U-tree), called the U-tree's \emph{interconnectors}. The U-tree is illustrated in Figure~\ref{figUtree}, where the \emph{horizontal interconnectors} $x,z$ and the \emph{vertical interconnectors} $y,w$ are indicated. Gregori proved that, by replacing each vertex of an extended skeleton $S_\varphi$ with a U-tree, the resulting \{1,2,3\}-tree $U(S_\varphi)$ is a partial grid if and only if the original \{1,2,3\}-tree $S_\varphi$ is. We call such operation the \emph{U-tree substitution}, and its output is the \emph{U-tree-transformed skeleton}. The U-tree substitution therefore \emph{preserves the immersibility} of extended skeletons, and the NP-completeness result followed. (We remark that Fact~\ref{pro:known_orientations} was used implicitly in Gregori's NP-completeness proof by local replacement, since a consistent orientation for the extended skeleton $S_\varphi$ is needed in~\cite{IPL89} to ensure that $U(S_\varphi)$ has the same immersibility as $S_\varphi$.)

\begin{figure}[!t]
\begin{center}

\setlength{\unitlength}{1mm}

\includegraphics[scale=0.69]{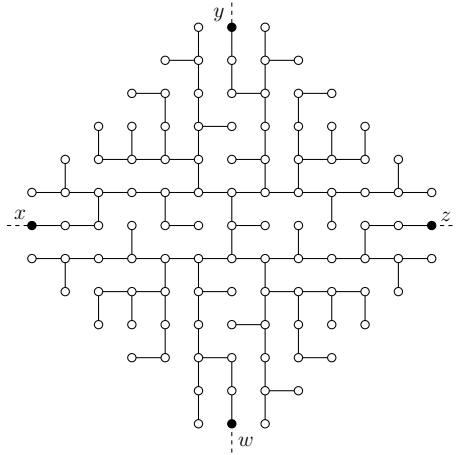}

\caption{\label{figUtree}Grid embedding for Gregori's U-tree.}

\end{center}
\end{figure}

Again, since the reader can find all the details of the U-tree substitution in the referenced paper, we underline the one single fact we will later depend upon:

\begin{fact}\label{pro:Utree_known_orientations}
If $U(S_\varphi)$ is a U-tree-transformed skeleton, then a consistent orientation for $U(S_\varphi)$ can be determined in polynomial time.
\end{fact}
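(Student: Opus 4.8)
The plan is to bootstrap from the consistent orientation already guaranteed for the extended skeleton. First I would invoke Fact~\ref{pro:known_orientations} to compute, in polynomial time, a consistent orientation $f_{S_\varphi}$ for the extended skeleton $S_\varphi$ from which $U(S_\varphi)$ is built. I would then exploit the structure of the U-tree substitution: $U(S_\varphi)$ consists of one U-tree per vertex of $S_\varphi$, together with the \emph{inter-U-tree} edges that correspond bijectively to the edges of $S_\varphi$. By design of the substitution, each edge of $S_\varphi$ oriented horizontally by $f_{S_\varphi}$ is attached to the horizontal interconnectors $x,z$, and each edge oriented vertically is attached to the vertical interconnectors $y,w$.

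With this in hand, I would define $f_{U(S_\varphi)}$ in two parts. For every inter-U-tree edge I inherit its orientation directly from the corresponding edge of $S_\varphi$ under $f_{S_\varphi}$. For the \emph{intra-U-tree} edges---the internal edges of each copy of the U-tree---I assign the orientation dictated by the canonical embedding of Figure~\ref{figUtree}, once the horizontal/vertical roles of its interconnectors have been fixed as above. Because a U-tree is a gadget of constant size and there is exactly one per vertex of $S_\varphi$, this second part costs linear time, so the whole orientation is produced in polynomial time.

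It then remains to verify consistency. Suppose $U(S_\varphi)$ is a partial grid. Since the U-tree substitution preserves immersibility, $S_\varphi$ is a partial grid as well, and hence---by consistency of $f_{S_\varphi}$---admits an embedding drawing each of its edges in the direction prescribed by $f_{S_\varphi}$. Performing the local replacement, that is, placing at each vertex a copy of the U-tree oriented so that its horizontal interconnectors open along the incident horizontal edges and its vertical interconnectors along the incident vertical edges, recovers an embedding of $U(S_\varphi)$; this is exactly the local replacement underlying Gregori's reduction. In it every inter-U-tree edge is drawn in its inherited direction and every intra-U-tree edge in its canonical direction, so $f_{U(S_\varphi)}$ is respected.

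The crux, and the step I expect to demand the most care, is pinning down the intra-U-tree orientations. This rests on the \emph{rigidity} of the U-tree: once we insist that the edges at its horizontal interconnectors $x,z$ be horizontal and those at its vertical interconnectors $y,w$ be vertical, the direction of every internal edge is forced. This requirement already removes the $90^\circ$-rotational freedom that would swap horizontal and vertical roles, while the surviving symmetries ($180^\circ$ rotations and axis-parallel reflections) preserve each edge's horizontal-or-vertical status. Checking that no internal edge can flip under these constraints amounts to inspecting Gregori's gadget---precisely the rigidity that made the U-tree a valid building block in~\cite{IPL89}, and the one structural fact my argument leans upon.
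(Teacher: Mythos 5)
Your proposal is correct and follows essentially the same route as the paper's proof: both combine the polynomially computable orientation $f_{S_\varphi}$ from Fact~\ref{pro:known_orientations} (for the external, inter-U-tree edges) with the fixed, canonical orientation of the rigid U-tree gadget (for the internal edges), and both justify consistency by noting that the substitution ties horizontal (resp.~vertical) edges of $S_\varphi$ to horizontal (resp.~vertical) interconnectors, so no $90^\circ$ rotation of any U-tree copy can occur. Your verification step is, if anything, slightly more explicit than the paper's about why consistency holds when $U(S_\varphi)$ is a partial grid, but the underlying argument is identical.
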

\begin{proof}
Let $U$ be the U-tree graph and let $S_\varphi$ be the extended skeleton that ought to be submitted to a U-tree substitution. Unlike extended skeletons, whose elements conform with some associated boolean formula, $U$ is a fixed, predefined graph that accepts a small number of well-known embeddings (e.g.~the one given in Figure~\ref{figUtree}). Thus, a consistent orientation $f_U : E(U) \rightarrow \{0,1\}$ is known. Now, a U-tree-transformed skeleton $U(S_\varphi)$ is entirely made of interconnected U-trees, one for each vertex in the extended skeleton $S_\varphi$ being transformed. Thus, any edge $uv \in U(S_\varphi)$ is either an \emph{internal} edge and belongs to some copy of $U$ or is an \emph{external} edge linking two adjacent copies of $U$. It happens that, when $S_\varphi$ is submitted to a U-tree substitution, every edge of $S_\varphi$ that is horizontal, according to some (polynomially obtainable) consistent orientation $f_{S_\varphi} : E(S_\varphi) \rightarrow \{0,1\}$, yields an also horizontal alignment of the U-trees that replace its incident vertices. In other words, the vertices $u$ and $v$ between which an horizontal external edge exists in $U(S_\varphi)$ will have been selected among the horizontal interconnectors of the U-trees they belong to. Analogously, if the original edge in $S_\varphi$ has a vertical orientation according to $f_{S_\varphi}$, then the corresponding U-trees will be tied to one another via vertical interconnectors as well (and they will be linked to one another by a vertical external edge). This way, $90^\circ$-rotations of U-trees shall never take place, keeping the orientation of the internal edges untouched in $U(S_\varphi)$, exactly as given by $f_U$.

The function defined below combines both $f_U$ and $f_{S_\varphi}$ to obtain a consistent orientation $f_{U(S_\varphi)} : E(U(S_\varphi)) \rightarrow \{0,1\}$ for $U(S_\varphi)$, completing the proof.

\begin{displaymath}
f_{U(S_\varphi)}(uv) = \left\{ \begin{array}{ll}
   f_U(uv) & \textrm{if $uv$ is internal,}\\
   f_{S_\varphi}(s(u)s(v)) & \textrm{if $uv$ is external.}
     \end{array} \right.
\end{displaymath}

In the expression above, vertices $s(u),s(v) \in S_\varphi$ are those which were substituted by the U-trees that contain $u,v \in U(S_\varphi)$, respectively.

\end{proof}

\section{Complexity dichotomy}\label{secCD}

In the first part of this section, we prove that \textsc{PGR} is NP-complete for some input degree sets. The second part is devoted to the polynomially decidable cases. For the sake of clarity, in both parts we start the approach to each new case by stating the degree set under consideration thenceforth.

\subsection{NP-complete cases}

In the forthcoming proofs, we take for granted that \textsc{PGR} belongs to NP, regardless of the restrictions imposed to its input, as one can always check the soundness of a given embedding in polynomial time.

Let $G$ be a partial grid, $sv, vt \in E(G)$. If edges $sv$ and $vt$ appear as two consecutive segments of the same grid line (row or column) in some embedding of $G$, we say they constitute a pair of \emph{collinear} edges. Analogously, if there is an embedding of $G$ in which $sv$ and $vt$ appear with a $90^\textrm{o}$ angle between them, we say they form a pair of \emph{orthogonal} edges.

\subsection*{\textbf{\{2,3\}-graphs}}

In this section, we introduce a special \{2,3\}-graph called the \emph{double ladder}. Figure~\ref{figDoubleLadder}(a) presents its only existing embedding, where vertices $x,y,z,w$ are again seen as interconnectors, since edges connecting different double ladders can only be incident to two such vertices. We mark that the circular ordering of the interconnectors is fixed, that is, they cannot switch positions among themselves. For this reason, we say $x,z$ (and $w,y$ as well) constitute a pair of \emph{opposed} interconnectors, whereas all other pairs of interconnectors are \emph{consecutive}.

\begin{figure}[!t]
\begin{center}

\setlength{\unitlength}{1mm}

\includegraphics[scale=0.69]{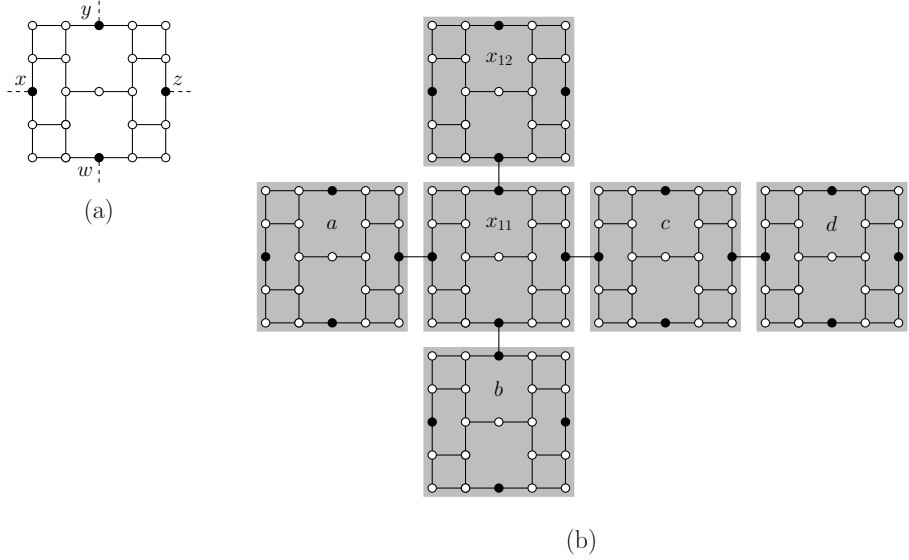}

\caption{\label{figDoubleLadder}(a) The \{2,3\} gadget (double ladder). (b) Double-ladder substitution.}

\end{center}
\end{figure}

Let $G$ be a graph. We define the \emph{double-ladder substitution} as the linear-time operation that obtains the graph $L(G)$ such that: (i) there is a bijection between each vertex $v$ in $G$ and a double ladder $l(v)$ in $L(G)$; and (ii) there is a bijection between each edge $uv$ in $G$ and an edge linking an interconnector of $l(u)$ to an interconnector of $l(v)$ in $L(G)$. Such interconnectors are said to have become \emph{active}. Figure~\ref{figDoubleLadder}(b) illustrates the result of a double-ladder substitution applied to the highlighted subgraph in Figure~\ref{figSkeleton}.

The double-ladder substitution does not necessarily preserve the immersibility of the original graph when the active interconnectors are chosen arbitrarily. The problem with structures like the double ladder, which present a fixed permutation of the interconnectors, is that they might not mimic the exact behavior of the original vertex they are meant to emulate. Indeed, if a pair of opposed (respectively, consecutive) interconnectors of $l(v)$ are chosen to link $l(v)$ to $l(s)$ and $l(t)$ during the double-ladder substitution, then the resulting graph $L(G)$ will only possibly admit embeddings in which those double ladders appear collinearly (resp.~orthogonally), thus destroying the equivalence between the immersibility of $G$ and that of $L(G)$ in case $sv,vt \in G$ happen not to be collinear (resp.~orthogonal) edges.

In order to preserve the immersibility of the original graph, it is mandatory that the choice of interconnectors match some feasible relative positioning of its edges, in case the graph is a partial grid. Although it may not be always easy to tell collinear from orthogonal pairs of edges in a given graph, Fact~\ref{pro:known_orientations} makes that a trivial task for extended skeletons.

\begin{lem}\label{lemDoubleLadder}Double-ladder substitution---with appropriately chosen {intercon\-nect\-ors}---preserves the immersibility of extended skeletons.
\end{lem}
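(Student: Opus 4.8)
The plan is to show that, given an extended skeleton $S_\varphi$, one can perform the double-ladder substitution in such a way that the resulting $\{2,3\}$-graph $L(S_\varphi)$ has the same immersibility as $S_\varphi$, and that the correct choice of active interconnectors is dictated entirely by the consistent orientation $f_{S_\varphi}$ guaranteed by Fact~\ref{pro:known_orientations}. The key observation is that the double ladder has a fixed interconnector permutation: the opposed pairs $x,z$ and $w,y$ lie along one axis, while consecutive pairs sit at a $90^\circ$ angle. Thus each double ladder is \emph{rigid} with respect to the relative directions of its active edges, and the substitution succeeds precisely when, at every vertex $v$ of $S_\varphi$, the pairs of incident edges that are collinear in $S_\varphi$ are routed through opposed interconnectors, and the pairs that are orthogonal are routed through consecutive ones.

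First I would use $f_{S_\varphi}$ to classify, for each vertex $v$, every pair of edges incident to $v$ as collinear or orthogonal: two incident edges are collinear iff they receive the \emph{same} orientation value under $f_{S_\varphi}$ (both $0$ or both $1$), and orthogonal iff they receive different values. This is where Fact~\ref{pro:known_orientations} does the real work, turning the geometric distinction---which the final sentence before the lemma warns ``may not be always easy to tell''---into a trivial syntactic check on a polynomially computable labelling. I would then fix once and for all an assignment of the (at most four) edges incident to $v$ to the interconnectors $x,y,z,w$ of $l(v)$ that respects this classification, for instance sending edges oriented horizontally to the horizontal interconnectors $x,z$ and edges oriented vertically to the vertical interconnectors $y,w$. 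Since $S_\varphi$ is a $\{1,2,4\}$-graph and no vertex carries more than two horizontal or two vertical incident edges (a property enforced by the orientation, cf.\ step~10 of the algorithm), this assignment is always well defined.

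With the interconnectors fixed, the proof of equivalence splits into the two standard directions. For the forward direction, I would take an embedding of $S_\varphi$ realizing the orientation $f_{S_\varphi}$ and inflate each vertex $v$ into a copy of the double ladder placed so that its active interconnectors point along the directions of the corresponding edges; because the interconnector choice matches the collinear/orthogonal pattern of the edges at $v$, each double ladder can be drawn with its unique embedding and the external edges close up into unit segments, yielding an embedding of $L(S_\varphi)$. For the converse, I would use the rigidity of the double ladder---it admits only one embedding, with its circular interconnector ordering frozen---to argue that any embedding of $L(S_\varphi)$ forces opposed active interconnectors to lie collinearly and consecutive ones orthogonally; contracting each double ladder back to a single vertex then recovers an embedding of $S_\varphi$ consistent with $f_{S_\varphi}$.

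The main obstacle I expect is the converse direction, specifically ruling out any ``escape'' embeddings of $L(S_\varphi)$ in which the double ladders interact in a globally valid but locally unintended way. The lemma's soundness rests on the claim, stated for Figure~\ref{figDoubleLadder}(a), that the double ladder has an \emph{essentially unique} embedding with a fixed circular ordering of interconnectors; I would need to justify this rigidity carefully, showing that the two rungs and the rails leave no freedom to permute $x,y,z,w$ or to fold the ladder, so that the relative geometry imposed on the active interconnectors is exactly the collinear/orthogonal dichotomy recorded by $f_{S_\varphi}$. Once that rigidity is in hand, the contraction argument is routine and the immersibility equivalence---and hence the lemma---follows.
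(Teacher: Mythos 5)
Your proposal is correct and takes essentially the same route as the paper: both use the consistent orientation from Fact~\ref{pro:known_orientations} to classify pairs of edges at each vertex as collinear or orthogonal, route them through opposed or consecutive interconnectors respectively, and then argue both directions via the rigidity of the double ladder---your ``contraction'' step is what the paper phrases as the observation that every double ladder occupies a perfect $5\times5$ grid square, so an embedding of $L(S_\varphi)$ scales down to an embedding of $S_\varphi$ on a grid five times smaller, and conversely an embedding of $S_\varphi$ blows up to one of $L(S_\varphi)$ on a grid five times larger. If anything, you are more explicit than the paper about the need to justify the double ladder's unique embedding, which the paper simply asserts from Figure~\ref{figDoubleLadder}(a).
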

\begin{proof}

Let $S_\varphi$ be an extended skeleton. By Fact~\ref{pro:known_orientations}, a consistent orientation $f_{S_\varphi}$ for all the edges of $S_\varphi$ can be determined in polynomial time. That is to say $f_{S_\varphi}$ provides us with a trustworthy relative positioning (collinear/orthogonal) of all edges incident to a common vertex. In order to match that positioning, it suffices that, for $sv,vt \in G$, the double-ladder substitution on graph $G$ employs a pair of opposed (resp.~consecutive) interconnectors of $l(v)$ to have it linked to $l(s)$ and $l(t)$ if $sv, vt$ are collinear (resp.~orthogonal).

Since a double ladder occupies a perfect $5\times5$ grid square in any unit-length embedding, the placement of the double ladder graphs in some embedding for $L(S_\varphi)$ shall always be met by a corresponding placement of $S_\varphi$'s vertices on a grid that is 5 times smaller. Edges linking one double ladder to another always occur between two adjacent $5\times5$ squares in the grid, therefore only edges of unit length will be required in the reduced grid. For the converse, we argue that, since the choice of interconnectors never disagrees with some consistent orientation of the edges of the extended skeleton, an embedding for an extended skeleton $S_\varphi$ will always lead to an embedding for $L(S_\varphi)$ in a grid that is 5 times larger.
\end{proof}

\begin{thm}
\textsc{PGR} is NP-complete for \{2,3\}- and \{2,3,4\}-graphs.
\end{thm}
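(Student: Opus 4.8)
The plan is to reduce the \textsc{Bhatt-Cosmadakis} problem to \textsc{PGR} restricted to \{2,3\}-graphs by means of the double-ladder substitution, and then to obtain the \{2,3,4\} case at no extra cost by monotonicity. Since membership in NP has already been granted regardless of the input restrictions, it suffices to exhibit a polynomial-time many-one reduction from a problem known to be NP-complete---here, the recognition of extended skeletons as partial grids.

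First I would take an arbitrary extended skeleton $S_\varphi$ and construct $L(S_\varphi)$ via the double-ladder substitution, selecting the active interconnectors exactly as prescribed in Lemma~\ref{lemDoubleLadder}, that is, guided by the consistent orientation $f_{S_\varphi}$ supplied by Fact~\ref{pro:known_orientations}. As the substitution is linear-time and $f_{S_\varphi}$ is polynomially obtainable, the whole construction is polynomial. Next I would verify that $L(S_\varphi)$ is genuinely a \{2,3\}-graph: every double ladder is a \{2,3\}-graph whose interconnectors have degree $2$, and a vertex $v$ of $S_\varphi$ activates exactly $d_{S_\varphi}(v)$ interconnectors of its double ladder, raising each of them to degree $3$; since $d_{S_\varphi}(v)\le 4$ and each double ladder offers precisely four interconnectors, there are always enough of them and no resulting vertex ever exceeds degree $3$. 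Connectivity of $L(S_\varphi)$ follows at once from that of $S_\varphi$. Correctness is then immediate from Lemma~\ref{lemDoubleLadder}: $L(S_\varphi)$ and $S_\varphi$ share the same immersibility, so $L(S_\varphi)$ is a partial grid if and only if $S_\varphi$ is. Because deciding the latter is NP-complete, \textsc{PGR} is NP-complete for \{2,3\}-graphs.

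For the \{2,3,4\} case I would simply note that every \{2,3\}-graph is also a \{2,3,4\}-graph, so any instance of the \{2,3\} restriction is a valid instance of the \{2,3,4\} restriction; hence, by the superset property, the NP-completeness just established carries over to \{2,3,4\}-graphs.

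The main obstacle lies not in the logic of the reduction---which rides entirely on the already-proven Lemma~\ref{lemDoubleLadder}---but in the degree bookkeeping certifying that the output falls in the intended class. The tight case is a degree-$4$ vertex of $S_\varphi$, whose substitute must use all four interconnectors simultaneously; here I must confirm that activating all four still leaves every vertex of $L(S_\varphi)$ at degree at most $3$, which in turn forces every interconnector of the double ladder to have degree $2$. A related delicate point, namely reconciling the \emph{fixed} circular order of the interconnectors (opposed versus consecutive) with the collinear/orthogonal constraints dictated by $f_{S_\varphi}$, is exactly what Lemma~\ref{lemDoubleLadder} guarantees to be feasible, precisely because the interconnector choice is made to agree with a consistent orientation.
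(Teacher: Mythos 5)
Your proposal is correct and follows exactly the paper's route: the paper's proof is precisely the polynomial-time reduction of \textsc{Bhatt-Cosmadakis} to \textsc{PGR} on \{2,3\}-graphs via the double-ladder substitution of Lemma~\ref{lemDoubleLadder}, with the \{2,3,4\} case obtained by the superset property. Your added degree bookkeeping (interconnectors of degree $2$ rising to degree $3$, with four interconnectors sufficing even for degree-$4$ vertices of $S_\varphi$) is left implicit in the paper but is a sound and welcome verification that the reduction's output really lies in the claimed class.
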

\begin{proof}
Since \textsc{Bhatt-Cosmadakis} is NP-complete and it can be polynomially reduced---via double-ladder substitution on its input---to \textsc{PGR} restricted to \{2,3\}-graphs, the latter problem is NP-complete as well. The NP-completeness for \{2,3,4\}-graphs follows.
\end{proof}

The acyclic case does not apply, for there are no trees without leaves.

\subsection*{\textbf{\{2,4\}-graphs}}

To prove the NP-completeness of the problem for \{2,4\}-graphs, our strategy will be identical to that just seen for \{2,3\}-graphs. We introduce an appropriate substitution procedure that preserves the immersibility of the extended skeleton.

The replacement structure we use is a simple $C_4$, or \emph{square} (shown in Figure~\ref{figC4}(a), in solid lines), whose vertices are regarded as interconnectors. Surprisingly, the $C_4$ shall replace both vertices and edges of the original graph, in what we call the \emph{square substitution}. In the square substitution, each vertex $v$ of the original graph $G$ gives rise to a square $q(v)$ in the resulting graph $Q(G)$, and each edge $uv \in G$ corresponds to another $C_4$, call it $q(uv)$, in $Q(G)$, linking $q(v)$ to $q(u)$ using opposed interconnectors of $q(uv)$. Figure~\ref{figC4}(c) shows the result of the square substitution applied to the highlighted subgraph in Figure~\ref{figSkeleton}. Notice that it looks as though the original graph had been rotated $45^\circ$, as depicted in Figure~\ref{figC4}(b).

\begin{figure}[!t]
\begin{center}

\setlength{\unitlength}{1mm}

\includegraphics[scale=0.69]{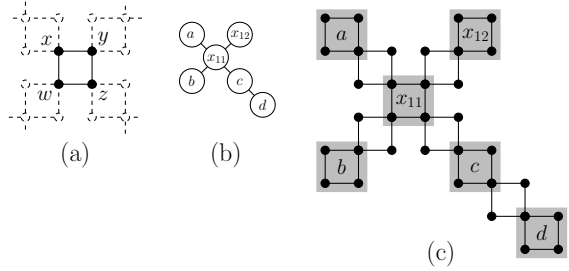}

\caption{\label{figC4}(a) The \{2,4\} gadget ($C_4$). (b) Rotation of $45^\circ$. (c) Square substitution.}

\end{center}
\end{figure}

\begin{lem}\label{lemSquare}Square substitution---with appropriately chosen interconnectors---{pre\-serves} the immersibility of extended skeletons.
\end{lem}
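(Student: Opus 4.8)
The plan is to mirror the structure of the proof of Lemma~\ref{lemDoubleLadder}, establishing the two directions of the immersibility equivalence between $S_\varphi$ and $Q(S_\varphi)$ while accounting for the peculiarity that, in the square substitution, both vertices \emph{and} edges of $S_\varphi$ are replaced by copies of $C_4$. First I would invoke Fact~\ref{pro:known_orientations} to obtain, in polynomial time, a consistent orientation $f_{S_\varphi}$ of the edges of $S_\varphi$. This tells us, for every pair of edges $sv, vt$ incident to a common vertex $v$, whether they are collinear (both horizontal or both vertical under $f_{S_\varphi}$) or orthogonal. As in the previous lemma, the square substitution must then select, for each vertex-square $q(v)$, the interconnectors to which the edge-squares $q(sv)$ and $q(vt)$ attach so that opposed interconnectors of $q(v)$ are used for collinear edges and consecutive interconnectors for orthogonal edges; this is the meaning of ``appropriately chosen interconnectors'' here.

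The geometric heart of the argument is the $45^\circ$ rotation suggested by Figure~\ref{figC4}(b). I would verify that the $C_4$, or \emph{square}, admits an essentially unique embedding (up to the grid symmetries), so that each $q(v)$ and each $q(uv)$ occupies a fixed unit square of the grid, and that an edge-square $q(uv)$ linked to $q(u)$ and $q(v)$ through opposed interconnectors forces $q(u)$ and $q(v)$ into a rigid relative placement. The key observation is that a diagonal chain of such squares, sharing single interconnector vertices, realizes precisely a line of the original grid rotated by $45^\circ$ and scaled: a horizontal edge of $S_\varphi$ becomes a diagonal run of squares in one of the two diagonal directions, and a vertical edge becomes a run in the other. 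This is exactly why the opposed-versus-consecutive distinction must be driven by $f_{S_\varphi}$: only the orientation information guarantees that the diagonal layout closes up consistently and never demands an infeasible $90^\circ$ twist at a branching vertex.

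For the forward direction I would argue that any embedding of $Q(S_\varphi)$ induces an embedding of $S_\varphi$ by placing each vertex $v$ at the grid point corresponding to its square $q(v)$ and then undoing the rotate-by-$45^\circ$-and-scale transformation; because adjacent vertex-squares are separated by exactly one edge-square along a diagonal, this map sends adjacent vertices of $S_\varphi$ to grid-adjacent points and yields a valid unit-length embedding on a coarser grid. For the converse, I would start from an embedding of $S_\varphi$ (which exists iff $S_\varphi$ is a partial grid), apply the rotate-and-scale transformation, and fill in the vertex- and edge-squares; since the interconnector choices were made to agree with $f_{S_\varphi}$, every local configuration around a vertex is realizable and the squares tile without conflict, producing an embedding of $Q(S_\varphi)$. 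Concluding that $S_\varphi$ and $Q(S_\varphi)$ have the same immersibility finishes the lemma.

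The main obstacle I anticipate is the converse direction, specifically the claim that the diagonal packing of squares never collides. Unlike the double-ladder case, where each gadget sat in its own axis-aligned $5\times5$ block and inter-gadget edges lived cleanly between adjacent blocks, here the squares are laid out diagonally and two different branches emanating from the same vertex-square occupy neighboring diagonal directions; I would need to check carefully that the orthogonality encoded by $f_{S_\varphi}$ translates into the two branches using genuinely distinct diagonal directions, so that no two edge-squares overlap and no interconnector is overloaded. Verifying this non-overlap rigorously---rather than by appeal to the figure---is the step that demands the most care.
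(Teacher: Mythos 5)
Your proposal takes essentially the same route as the paper: the same use of Fact~\ref{pro:known_orientations} to choose opposed interconnectors for collinear edges and consecutive ones for orthogonal edges, and the same rotate-by-$45^\circ$-and-scale correspondence, which the paper makes concrete through the explicit bijection $h(i,j)=(2i+2j,-2i+2j)$ for building an embedding of $Q(S_\varphi)$ from one of $S_\varphi$, and through $h^{-1}(i,j)=(\frac{i-j}{4},\frac{i+j}{4})$ for the reverse direction. The non-overlap issue you flag as the delicate step is precisely what the explicit formula disposes of---injectivity of $h$ places distinct vertex-squares on distinct cells of a diagonal sublattice, with each edge-square occupying the unique intermediate cell---so your argument, once that formula (or an equivalent explicit map) is written down, coincides with the paper's proof.
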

\begin{proof}
Here again, despite the fixed circular permutation of the interconnectors of a $C_4$, the foreknowledge of consistent orientations for extended skeletons (Fact~\ref{pro:known_orientations}) allows active interconnectors to be suitably chosen in $q(v)$. Let $S_\varphi$ be an extended skeleton and let $Q(S_\varphi)$ be the result of some such orientation-aware square substitution. We want to prove that $S_\varphi$ admits a unit-length embedding if and only if $Q(S_\varphi)$ does.

Suppose $S_\varphi$ is a partial grid graph. Then, there is a unit-length embedding $\Gamma$ for $S_\varphi$ such that the relative position of every pair of edges $sv,vt \in S_\varphi$ matches the (only) relative position of $q(sv),q(vt)$ allowed by that particular choice of interconnectors of $q(v)$. Now, it is always possible to obtain a unit-length embedding $\Gamma'$ for $Q(S_\varphi)$ as follows. For each vertex $v$ located at a grid point with coordinates $(i,j)$ in $\Gamma$, place the topmost, leftmost vertex of $q(v)$ at $h(i,j) = (2i+2j,-2i+2j)$. Now place $q(uv)$, for every edge $uv \in S_\varphi$, at the unit-area square that intersects both $q(u)$ and $q(v)$.

For the converse, suppose $\Gamma'$ is a unit-length embedding for $Q(S_\varphi)$. We will show this implies the existence of a unit-length embedding $\Gamma$ for $S_\varphi$. Without loss of generality, let the topmost vertex in the leftmost column of $\Gamma$ be located at the grid's origin. The function $h: \mathbb{Z}^2 \rightarrow \mathbb{Z}^2$ just defined is clearly bijective. Then, for each square $q(v)$ located at a unit-area square whose topmost, leftmost corner has coordinates $(i,j)$, $i,j$ even (for these are, by construction, the $C_4$ associated to vertices, not edges, of $S_\varphi$), place vertex $v$ at coordinates $h^{-1}(i,j)=(\frac{i-j}{4},\frac{i+j}{4})$ of an initially empty embedding $\Gamma$. Now link vertices $u,v$ by a unitary segment, in $\Gamma$, if there is a $C_4$ in $\Gamma'$ intersecting both $q(u)$ and $q(v)$, and $\Gamma$ is a unit-length embedding for $S_\varphi$.
\end{proof}

\begin{thm}
\textsc{PGR} is NP-complete for \{2,4\}-graphs.
\end{thm}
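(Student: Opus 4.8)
The plan is to mirror exactly the argument just used for \{2,3\}-graphs, substituting the square substitution for the double-ladder substitution and invoking Lemma~\ref{lemSquare} in place of Lemma~\ref{lemDoubleLadder}. Since we already take for granted that \textsc{PGR} lies in NP under any restriction of its input, the sole remaining task is to exhibit a polynomial-time reduction establishing NP-hardness for \{2,4\}-graphs, and the natural source problem is \textsc{Bhatt-Cosmadakis}, which is known to be NP-complete.

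Concretely, I would proceed as follows. Given an extended skeleton $S_\varphi$, first compute a consistent orientation $f_{S_\varphi}$ in polynomial time via Fact~\ref{pro:known_orientations}; then perform the square substitution, choosing active interconnectors in accordance with $f_{S_\varphi}$ so that collinear incident edges map to opposed interconnectors and orthogonal ones to consecutive interconnectors. This yields the graph $Q(S_\varphi)$. By Lemma~\ref{lemSquare}, this orientation-aware substitution preserves immersibility, so $S_\varphi$ is a partial grid if and only if $Q(S_\varphi)$ is. As the square substitution introduces a fixed-size $C_4$ for each vertex and each edge of $S_\varphi$, it runs in linear time, making the whole transformation a legitimate polynomial reduction; NP-completeness for \{2,4\}-graphs then follows at once.

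The one point that must be checked carefully---and the only real obstacle, given that immersibility is already handed to us by Lemma~\ref{lemSquare}---is that $Q(S_\varphi)$ is genuinely a \{2,4\}-graph, i.e.\ that no vertex of degree $1$ or $3$ survives. Here the design of the gadget does the work: each $C_4$ contributes degree $2$ to each of its four corners, and two squares are tied together by \emph{identifying} a corner of an edge-square with an interconnector of the incident vertex-square, so every shared corner acquires degree $2+2=4$ while every unshared corner retains degree $2$. Since each corner of a vertex-square $q(v)$ is shared with at most one edge-square, and since $v$ has degree at most $4$ in the \{1,2,4\}-tree $S_\varphi$ (a leaf activating one corner, a degree-$4$ vertex activating all four available corners), no vertex ever exceeds degree $4$, and only degrees $2$ and $4$ occur throughout. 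With this degree accounting confirmed, the theorem reduces to a direct application of Lemma~\ref{lemSquare}.
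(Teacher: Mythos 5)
Your proposal is correct and follows exactly the paper's route: reduce \textsc{Bhatt-Cosmadakis} to \textsc{PGR} on \{2,4\}-graphs via the orientation-aware square substitution and invoke Lemma~\ref{lemSquare}, membership in NP being taken for granted. Your added degree accounting---confirming that corners shared by identification get degree $2+2=4$ while the rest keep degree $2$, so $Q(S_\varphi)$ really is a \{2,4\}-graph---is a detail the paper leaves implicit, and you resolve it correctly.
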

\begin{proof}
By Lemma~\ref{lemSquare}, \textsc{Bhatt-Cosmadakis} reduces to \textsc{PGR} for \{2,4\}-graphs, hence the latter is NP-complete.
\end{proof}

Again, since there are no trees without degree-1 vertices, the problem on \{2,4\}-graphs cannot be restricted to trees.

\subsection*{\textbf{\{1,3\}-graphs}}

The idea is basically the same. We introduce an appropriate gadget (one that is a \{1,3\}-tree, in this case) and an associated transformation that, given an extended skeleton $S_\varphi$, produces a \{1,3\}-tree $Q(S_\varphi)$ with the same immersibility.

\begin{figure}[!t]
\begin{center}

\setlength{\unitlength}{1mm}

\includegraphics[scale=0.69]{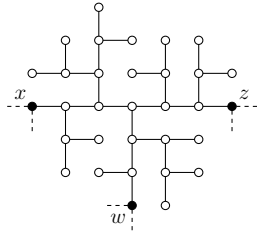}

\caption{\label{fig3plug}The \{1,3\} gadget (three-plug tree).}

\end{center}
\end{figure}

The gadget we employ is the one shown in Figure~\ref{fig3plug}. We call it the \emph{three-plug tree}. As usual, interconnectors are the labeled vertices in the figure.

We define the \emph{three-plug substitution} analogously to the double-ladder substitution, only replacing the double ladder with the three-plug tree. The three-plug substitution has an odd characteristic, though. Since the three-plug tree only presents 3 interconnectors, the input of a three-plug substitution is restricted to graphs with maximum degree not greater than 3. We want to show that \textsc{Bhatt-Cosmadakis} reduces polynomially to the problem of deciding whether a \{1,3\}-tree is a partial grid. But extended skeletons, which are the input of the \textsc{Bhatt-Cosmadakis} problem, present degree-4 vertices, hence the three-plug substitution cannot be applied to extended skeletons \emph{directly}.

This apparent hindrance is solved by first transforming the extended skeleton into a \{1,2,3\}-tree---with its same immersibility---via Gregori's U-tree substitution. Then, the resulting U-tree-transformed skeleton $U(S_\varphi)$, which has no 4-degree vertices, can be submitted to the three-plug substitution uneventfully, obtaining a \{1,3\}-tree $T(U(S_\varphi))$, still with the same original immersibility.

\begin{lem}\label{lem3plug}Three-plug substitution---with appropriately chosen interconnectors---preserves the immersibility of U-tree-transformed extended skeletons.
\end{lem}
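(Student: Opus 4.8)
The plan is to mirror the strategy of Lemmas~\ref{lemDoubleLadder} and~\ref{lemSquare}, but with the three-plug tree as the gadget and relying on Fact~\ref{pro:Utree_known_orientations} in place of Fact~\ref{pro:known_orientations}. First I would invoke Fact~\ref{pro:Utree_known_orientations} to obtain, in polynomial time, a consistent orientation $f_{U(S_\varphi)}$ for the U-tree-transformed skeleton. This orientation tells me, for every vertex $v$ and every pair of incident edges $sv,vt$, whether $sv$ and $vt$ must be drawn collinearly (equal value under $f_{U(S_\varphi)}$) or orthogonally (distinct values). I would then use this information to select the active interconnectors of each three-plug copy $t(v)$: a pair of interconnectors that the gadget necessarily draws collinearly is assigned to a collinear pair of incident edges, a pair it draws orthogonally to an orthogonal pair, and for degree-3 vertices all three interconnectors become active. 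The point to verify at this stage is that the three interconnectors of the three-plug tree can realize every local configuration a vertex of degree at most~3 exhibits in a grid---that is, one collinear pair together with a third, orthogonal edge, the ``T-junction'' of three of the four grid directions.

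Second, for the forward direction I would assume $U(S_\varphi)$ is a partial grid and fix an embedding $\Gamma$ of it. I would scale $\Gamma$ up by the fixed dimensions of the three-plug tree's bounding region, replacing each vertex $v$ by a suitably rotated/reflected copy of $t(v)$ placed in the block corresponding to the grid point of $v$, oriented so that its active interconnectors point toward the blocks of $v$'s neighbors. The external edges then join adjacent blocks by unit segments, producing an embedding of $T(U(S_\varphi))$.

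Third, for the converse I would assume $T(U(S_\varphi))$ is a partial grid with embedding $\Gamma'$ and recover an embedding $\Gamma$ of $U(S_\varphi)$ by contracting each copy $t(v)$ back to a single vertex placed at a scaled-down image of its location in $\Gamma'$, joining $u,v$ by a unit segment exactly when an external edge joins $t(u)$ to $t(v)$. Here the orientation-aware choice of interconnectors is what guarantees that the collinear/orthogonal pattern forced inside $T(U(S_\varphi))$ is consistent with a genuine embedding of $U(S_\varphi)$.

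I expect the main obstacle to lie both in the converse and in the very first verification: unlike the double ladder, which is rigid and occupies a fixed $5\times 5$ square, the three-plug tree is itself a tree and hence \emph{a priori} flexible, so I must argue that its internal structure nonetheless pins down the relative directions of its three interconnectors up to a single global rotation/reflection of the gadget. Establishing this \emph{directional rigidity of the interconnectors} is precisely what legitimizes both the block-replacement in the forward direction and the clean contraction in the converse; once it is secured, the scaling argument goes through exactly as in the proofs of Lemmas~\ref{lemDoubleLadder} and~\ref{lemSquare}.
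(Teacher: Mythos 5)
Your proposal matches the paper's proof essentially step for step: the paper likewise invokes Fact~\ref{pro:Utree_known_orientations} to choose the active interconnectors in agreement with a consistent orientation of $U(S_\varphi)$, and then rests on a gadget-rigidity claim plus the same scaling argument used for Lemmas~\ref{lemDoubleLadder} and~\ref{lemSquare}. The directional rigidity you single out as the main obstacle is exactly what the paper asserts (without further detail) by stating that the interconnectors of a three-plug tree always appear in the same circular permutation and that two adjacent three-plug trees always occupy a rigid $7\times 14$ rectangle, so your treatment is, if anything, more explicit about what must be verified.
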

\begin{proof}
Just like in the double ladder, interconnectors in the three-plug tree will always appear in the same circular permutation. This could have posed a problem to the desired immersibility preservation of the process, were it not for the fact that we know of a consistent orientation for U-tree-transformed skeletons (by Fact~\ref{pro:Utree_known_orientations}). Thus, with active interconnectors of the three-plug trees chosen appropriately, and because two adjacent three-plug trees will always occupy a rigid $7\times 14$ rectangle, the graph $T(U(S_\varphi))$ resulting from a three-plug substitution on $U(S_\varphi)$ will admit a unit-length embedding if and only if $U(S_\varphi)$ does.
\end{proof}

\begin{thm}
\textsc{PGR} is NP-complete for \{1,3\}- and \{1,3,4\}-trees.
\end{thm}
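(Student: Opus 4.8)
The plan is to exhibit a polynomial-time reduction from \textsc{Bhatt-Cosmadakis} to \textsc{PGR} restricted to \{1,3\}-trees, following the same template used in the earlier cases but composing \emph{two} immersibility-preserving substitutions rather than one. The obstruction flagged just before the statement is that the three-plug substitution accepts only inputs of maximum degree $3$, whereas extended skeletons contain degree-$4$ vertices; so a direct application is impossible and I would route the reduction through an intermediate graph of bounded degree.

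First, given an instance $S_\varphi$ of \textsc{Bhatt-Cosmadakis}, I would apply Gregori's U-tree substitution to obtain the \{1,2,3\}-tree $U(S_\varphi)$, which has the same immersibility as $S_\varphi$ and is computable in polynomial time. Since $U(S_\varphi)$ has maximum degree $3$, it is a legitimate input for the three-plug substitution. Second, I would invoke Fact~\ref{pro:Utree_known_orientations} to compute, in polynomial time, a consistent orientation for $U(S_\varphi)$; this orientation is precisely what the phrase ``appropriately chosen interconnectors'' in Lemma~\ref{lem3plug} demands, as it reveals the collinear/orthogonal relative positioning of the edges meeting at each vertex. Applying the three-plug substitution to $U(S_\varphi)$ then produces a \{1,3\}-tree $T(U(S_\varphi))$ which, by Lemma~\ref{lem3plug}, admits a unit-length embedding if and only if $U(S_\varphi)$ does, hence if and only if $S_\varphi$ does.

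Chaining these equivalences yields a polynomial-time many-one reduction, so \textsc{PGR} is NP-hard for \{1,3\}-trees; membership in NP is immediate since any candidate embedding can be verified in polynomial time. The extension to \{1,3,4\}-trees is then free by the superset property, as $\{1,3\} \subset \{1,3,4\}$.

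The step I expect to require the most care is the degree bookkeeping that certifies $T(U(S_\varphi))$ is genuinely a \{1,3\}-tree. One must check that every \emph{active} interconnector ends up with degree exactly $3$ and every \emph{inactive} one with degree exactly $1$, uniformly across the three possible degrees ($1$, $2$, $3$) of the vertex of $U(S_\varphi)$ being replaced. Confirming that the three-plug gadget together with its linking rule realizes this behaviour is the genuinely delicate point, but it is internal to the gadget construction (and to Lemma~\ref{lem3plug}) and leaves the overall reduction scheme intact.
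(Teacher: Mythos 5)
Your proposal is correct and follows essentially the same route as the paper: compose Gregori's U-tree substitution (to eliminate degree-$4$ vertices) with the three-plug substitution guided by the consistent orientation of Fact~\ref{pro:Utree_known_orientations}, then invoke Lemma~\ref{lem3plug}, chain the immersibility equivalences, and obtain the \{1,3,4\}-case by the superset property. Your closing remark about verifying the degree bookkeeping of $T(U(S_\varphi))$ is a reasonable point of care, but as you note it is internal to the gadget construction and does not alter the reduction scheme.
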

\begin{proof}
Same strategy here. By Lemma~\ref{lem3plug} and the fact that U-tree substitution preserves the immersibility of extended skeletons (proved by Gregori~\cite{IPL89}), \textsc{Bhatt-Cosmadakis} reduces to \textsc{PGR} for \{1,3\}-trees, hence the latter problem is NP-complete. The NP-completeness for \{1,3,4\}-trees follows, by the superset property.
\end{proof}

\begin{thm}
\textsc{PGR} is NP-complete for strictly binary trees.
\end{thm}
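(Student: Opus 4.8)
The plan is to reduce from the \{1,3\}-tree case just established, strengthening its output into a strictly binary tree by a single local augmentation. Recall that a strictly binary tree is precisely a tree in which every internal vertex has exactly two children; as an unrooted graph this means it has exactly one vertex of degree~2 (the root), with every remaining vertex of degree~1 or~3. Thus a \{1,3\}-tree is \emph{not} strictly binary---it has no degree-2 vertex at all---so the superset property does not apply and a direct argument is needed. The gap between the two classes is, however, a single degree-2 vertex, and the whole proof will consist of manufacturing exactly one such vertex while preserving immersibility.

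First I would take the \{1,3\}-tree $T := T(U(S_\varphi))$ produced by the three-plug substitution, whose immersibility equals that of $S_\varphi$ (and hence tracks the satisfiability of $\varphi$) by Lemma~\ref{lem3plug}. I would then fix a single leaf $v$ of $T$ and attach to it one new pendant vertex $w$, forming $T'$. In $T'$ the vertex $v$ now has degree~2 (its old neighbor together with $w$), the new vertex $w$ has degree~1, and every other vertex keeps the degree it had in $T$; hence $T'$ has exactly one degree-2 vertex and is a strictly binary tree, rooted at $v$.

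It remains to check that $T'$ and $T$ have the same immersibility. One direction is immediate: $T$ is a subgraph of $T'$, so any embedding of $T'$ restricts to one of $T$. For the converse I would start from an embedding of $T$ and place $w$ on a free grid-neighbor of $v$; this succeeds provided $v$ has at least one of its four grid-neighbors unoccupied. Here I would exploit the rigidity already invoked in Lemma~\ref{lem3plug}: since adjacent three-plug trees are locked into rigid $7\times 14$ blocks, $T$ admits essentially one embedding up to the global isometries, so it suffices to choose $v$ to be a leaf that, in that rigid drawing, points outward into the empty region surrounding the gadget structure. Selecting $v$ inside the gadget associated with an extremal (leaf) vertex of $U(S_\varphi)$ guarantees such a free direction in every embedding, and the extension goes through.

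The main obstacle is exactly this free-neighbor requirement for the converse direction: a priori a leaf of a partial grid can be completely surrounded by other, non-adjacent vertices, so the pendant $w$ cannot be attached blindly. The rigidity of the three-plug construction is what rules this out, which is why the argument must single out a boundary leaf rather than an arbitrary one. With that point settled, \textsc{Bhatt-Cosmadakis} reduces to \textsc{PGR} on strictly binary trees through the chain $S_\varphi \to U(S_\varphi) \to T \to T'$, and NP-completeness follows.
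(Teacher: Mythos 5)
Your proposal is correct and follows essentially the same route as the paper: both take the \{1,3\}-tree $T(U(S_\varphi))$ from the three-plug substitution and attach a single pendant vertex to a carefully chosen leaf that has a free adjacent grid point in the (rigid) embedding, thereby creating the unique degree-2 root of a strictly binary tree while preserving immersibility. Your explicit discussion of the free-neighbor pitfall is in fact slightly more careful than the paper's own wording, but the underlying argument is identical.
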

\begin{proof}
A strictly binary tree is a connected, acyclic graph whose vertices fall in one of three categories: (i)~1-degree vertices (the tree's \emph{leafs}); (ii)~a single 2-degree vertex (the tree's \emph{root}); and (iii)~3-degree vertices (the \emph{internal vertices}). After transforming an extended skeleton $S_\varphi$ into $T(U(S_\varphi))$ via three-plug substitution, the resulting graph comprises a series of interconnected three-plug-trees. Take any 1-degree vertex $v$ that sits next to a non-used grid point in the known embedding of the three-plug tree (say, for example, the topmost vertex in Figure~\ref{fig3plug}) and give it a new neighbor $w$ not yet in the graph. Vertex $v$ has become a 2-degree vertex, and the whole graph $T(U(S_\varphi))$ is now a strictly binary tree, rooted in $v$, with the same immersibility as $S_\varphi$. This completes the proof.
\end{proof}

\subsection{Polynomial cases}

\subsection*{\textbf{\{1,2\}-graphs}}
Trivial. A path on $n$ vertices can always be laid out on a straight line of a $1\times n$ grid, and any even cycle on $2k$ vertices can be embedded on a $2\times k$ grid. Odd cycles are not bipartite and therefore cannot be partial grids.

\subsection*{\textbf{\{3,4\}-graphs}}
\begin{thm}\label{thm34}
No unit-length embedding exists for \{3\}-, \{4\}- or \{3,4\}-graphs.
\end{thm}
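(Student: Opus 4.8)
The plan is to argue by contradiction with a single extremal-vertex argument that disposes of all three degree sets at once. Suppose, for the sake of contradiction, that some \{3\}-, \{4\}-, or \{3,4\}-graph $G$ admits a unit-length embedding $\Gamma$. Because a partial grid is by definition a (finite) subgraph of some grid $G_{M\times N}$, the image of $\Gamma$ occupies only finitely many grid points, so extremal positions are well defined.

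First I would record the elementary structural fact underlying everything: in any unit-length embedding each vertex sits on a grid point, and each incident edge is drawn as a unit segment heading in exactly one of the four cardinal directions (up, down, left, right). Since the four neighboring grid positions of a point are pairwise distinct, no two edges at a vertex can head in the same direction, and hence $d_G(v)$ equals the number of cardinal directions occupied at $v$, which is at most $4$.

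Next I would select the critical vertex. Among all vertices of $G$, let $v$ be one whose image under $\Gamma$ has maximum $y$-coordinate, and, among those, minimum $x$-coordinate. I claim $d_G(v)\le 2$. Indeed, $v$ admits no upward edge, since that would require a vertex placed strictly above it, contradicting the maximality of its $y$-coordinate; and $v$ admits no leftward edge, since such a neighbor would share $v$'s (maximal) $y$-coordinate while having a strictly smaller $x$-coordinate, contradicting the minimality of $v$'s $x$-coordinate within the topmost row. The only directions left available to $v$ are down and right, whence $d_G(v)\le 2$. This contradicts the hypothesis, as every vertex of a \{3\}-, \{4\}-, or \{3,4\}-graph has degree at least $3$; since the contradiction is reached uniformly for all three degree sets, the theorem follows.

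I do not anticipate a genuine obstacle; the one point demanding care is the tie-breaking in the choice of $v$. Selecting the \emph{leftmost} vertex in the \emph{topmost} row, rather than merely a topmost vertex, is precisely what forbids the leftward edge and drives the degree down to $2$. A topmost vertex alone would rule out only the upward edge, still leaving room for degree $3$, so the lexicographic corner choice is the essential ingredient of the argument.
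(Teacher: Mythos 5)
Your proof is correct and is essentially the paper's own argument: both pick an extremal ``corner'' vertex of the embedding (the paper takes the topmost vertex of the leftmost column, you take the leftmost vertex of the topmost row---a symmetric choice) and observe that two of the four cardinal directions are forbidden, forcing degree at most $2$ and yielding the contradiction.
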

\begin{proof}
Suppose there is a unit-length embedding $\Gamma$ for a graph with no vertices of degree 1 or 2. Let $v$ be the topmost vertex in the leftmost column of $\Gamma$. Since all other vertices are placed below or to the right of $v$, $v$ can have at most 2 neighbors, a contradiction.
\end{proof}

\subsection*{\textbf{\{1,4\}-graphs}}

\begin{figure}[!t]
\begin{center}

\setlength{\unitlength}{1mm}

\includegraphics[scale=0.69]{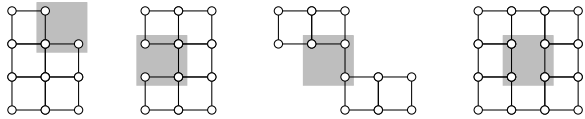}

\caption{\label{figD14proof}Proof of Theorem~\ref{thm14}: examples of incomplete unit-area squares $\sigma$ present in connected, non-grid, partial grids.}

\end{center}
\end{figure}

\begin{thm}\label{thm14}
A \{1,4\}-graph is a partial grid if and only if its degree-4 vertices induce a grid. \textsc{PGR} is therefore polynomial for \{1,4\}-graphs.
\end{thm}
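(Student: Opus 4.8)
The plan is to prove both implications and then observe that the characterizing condition is testable in polynomial time. Throughout, write $S$ for the set of degree-$4$ vertices of $G$ and $H=G[S]$ for the subgraph they induce; note that every vertex outside $S$ is a leaf whose unique neighbour lies in $S$.

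For the \emph{if} direction I would argue constructively. Assuming $H\cong G_{M\times N}$, place $H$ by its natural grid embedding, occupying an $M\times N$ block of grid points. An interior vertex of this block already has its four grid-neighbours inside the block, so it is saturated; a boundary (resp.\ corner) vertex has exactly one (resp.\ two) grid-directions pointing out of the block still free, which is precisely the number of extra neighbours it needs to reach degree $4$ in $G$. I would then hang the required leaves on those free outward cells. Since distinct boundary vertices use disjoint outward cells, no two leaves collide, so the result is a valid unit-length embedding of $G$; the same recipe handles the degenerate grids $G_{1\times N}$. Hence $G$ is a partial grid.

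For the \emph{only if} direction I would fix an embedding $\Gamma$ of $G$ and let $R\subseteq\mathbb Z^2$ be the set of points occupied by $S$. Two observations drive the argument. First, because a degree-$4$ vertex has four unit-length edges, every grid-neighbour of every point of $R$ is occupied; consequently two points of $R$ at unit distance are always joined by an edge, so $H$ is exactly the grid-subgraph induced on $R$. Moreover a leaf, having degree $1$, can never be an internal vertex of a path, so any $G$-path between two vertices of $S$ stays inside $S$; thus $H$ is connected and $R$ is $4$-connected. Second, the key local lemma: \emph{if two points of $R$ are diagonal corners of a unit square, then the remaining two corners also lie in $R$}. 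Indeed each such corner is a common grid-neighbour of both diagonal degree-$4$ vertices, hence is occupied and joined by an edge to each of them, so it has degree at least $2$ and cannot be a leaf. This is exactly what forbids the ``incomplete unit-area squares'' of Figure~\ref{figD14proof}.

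It remains to deduce from these two facts that $R$ is a solid axis-aligned rectangle, whence $H\cong G_{M\times N}$ and the characterization follows; testing whether $H$ is a grid is then clearly polynomial (for instance, anchor a degree-$2$ vertex of $H$ as a corner and propagate coordinates by breadth-first search, checking consistency against the grid pattern), which yields the stated complexity. I expect this last geometric step to be the main obstacle: the diagonal-fill lemma says $R$ has no reflex corner, and a $4$-connected point set with no reflex corner must be a full rectangle, but turning ``no reflex corner'' into a clean rectangularity proof needs care. My plan is to show that each row and each column of $R$ is an interval (horizontal and vertical convexity), by routing a $4$-connecting path around a hypothetical gap and invoking the diagonal-fill lemma at the extreme point of that gap, and then to apply the lemma once more to force all rows to share the same left and right endpoints. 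I would also dispose separately of the degenerate instances---a single degree-$4$ vertex, or a graph such as $K_2$ with no degree-$4$ vertex at all---under the natural convention for the empty/trivial grid.
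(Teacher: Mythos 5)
Your proposal is correct and follows essentially the same route as the paper: the same constructive ``if'' direction (leaves hung on the free outward cells of the $M\times N$ block), and your diagonal-fill lemma is exactly the paper's core argument that the two remaining corners of an incomplete unit-area square are common neighbours of two degree-4 vertices, hence have degree at least 2, hence degree 4, forcing all four edges of the square to be present. The final geometric step you flag as the main obstacle---deducing that a connected, diagonal-fill-closed point set is a solid rectangle---is precisely the step the paper itself disposes of only by asserting (with reference to Figure~\ref{figD14proof}) that a connected non-grid partial grid must contain a unit square with at least 2 and at most 3 edges, and your sketched plan for it (row/column convexity by rerouting a connecting path one level at a time, then forcing adjacent rows to share endpoints via one more application of the lemma) does go through, so if anything your write-up would be the more rigorous of the two.
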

\begin{proof}
Let $G$ be a connected \{1,4\}-graph. If the subgraph of $G$ induced by all its vertices of degree 4 is a grid, then there is always a unit-length embedding for $G$, in which the degree-4 vertices occupy all points of an $M\times N$ rectangle, surrounded by the $2(M+N)$ degree-1 vertices, which are necessarily adjacent to the vertices in the boundaries of such rectangle.

Now, let $\Gamma$ be a unit-length embedding for $G$, and let $G'$ be the graph induced by all degree-4 vertices of $G$. Since $G$ is a partial grid, $G'$ is a partial grid as well. Moreover, $G'$ must be connected, since $G$ is itself connected and the vertices in $G \setminus G'$ have degree 1. Suppose, by contradiction, that $G'$ is a connected partial grid that is not a grid graph (i.e.~the image of its grid mapping does not correspond to all the points and segments of an $M\times N$ rectangle in the grid). This hypothesis implies the existence of some unit-area square $\sigma$ (see Figure~\ref{figD14proof}), in $\Gamma$, containing at least 2 but no more than 3 edges of $G'$. Without loss of generality, let $u,v \in G'$ be incident to two such edges and placed at the extremes of a diagonal of $\sigma$. Since $u$ and $v$ have degree 4 in $G$, the two other diagonally opposed corners of $\sigma$ must correspond to vertices $s,t \in G$ which are necessarily adjacent to both $u$ and $v$. Thus, the degree of $s$ and $t$, in $G$, is at least 2, hence exactly 4, therefore $s$ and $t$ must belong to $G'$ as well. As a result, $\sigma$ contains 4 edges $us,sv,vt,tu$ of $G'$, a contradiction.

A polynomial-time recognition of grids can be achieved as follows. First, locate the 4 vertices of degree 2 and the $2(M+N)-4$ vertices of degree 3 present in the graph. They define the boundaries of an $M\times N$ rectangle in the grid. Now, recursively place each degree-4 vertex at the fourth point of a unit-area square already containing two of its neighbors (diagonally opposed in the grid) and one of its non-neighbors. Repeat this procedure inwardly, starting from the rectangle corners, until the vertices of degree 4 have matched the inner points of the rectangle (in which case the graph is a grid) or until such matching does not exist (in which case it is not).
\end{proof}

\section{Three-dimensional partial grids}\label{sec3d}

A \emph{3d grid} $G_{K \times L \times M}$ has as vertex set the points $\{1,\ldots,K\} \times \{1,\ldots,L\} \times \{1,\ldots,M\}$. Two vertices are adjacent if their distance is exactly $1$. A \emph{3d partial grid} is any subgraph (not necessarily induced) of a 3d grid. The 3d grids and 3d partial grids are bipartite, but not necessarily planar. The 3d \textsc{Partial Grid Recognition} problem (3d-\textsc{PGR}) consists of deciding whether a given graph is a 3d partial grid. Its NP-completeness was previously unknown.

Vertices of a 3d grid have degree at most $6$. Thus, a complete dichotomy would need to consider $2^6-1=63$ possible nonempty subsets. Although providing a complete dichotomy for three-dimensional partial grids is beyond the scope of this paper, it is perhaps surprising that the techniques developed for the two-dimensional case settle the complexity of all but $13$ out of those $63$ cases, as we show next.

We define the prism of a graph $G$ as the simple graph with vertices\linebreak $V(G) \times \{0,1\}$ and edge $(u,i)(v,j)$ if (i) $u=v$ or (ii) $i=j$ and $uv \in E(G)$. A partial grid and the corresponding prism are illustrated in Figure~\ref{fig:prism}. The following theorem uses the prism operator to interrelate the immersibilities of partial grids and 3d partial grids.

\begin{figure}[!t]
\begin{center}

\setlength{\unitlength}{1mm}

\includegraphics[scale=0.69]{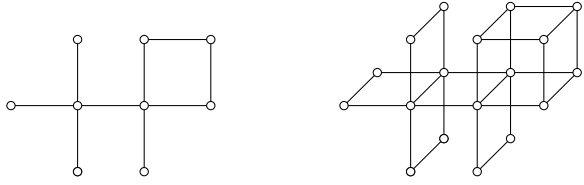}

\caption{\label{fig:prism}A partial grid and its corresponding prism.}

\end{center}
\end{figure}

\begin{thm} \label{thm:prism}
A graph $G$ is a partial grid if and only if the prism of $G$ is a 3d partial grid.
\end{thm}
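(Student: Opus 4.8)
The plan is to prove the two directions separately, with the forward direction being a direct lifting and the backward direction relying on a rigidity observation about unit $4$-cycles in three dimensions.

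For the forward direction, suppose $G$ is a partial grid with a planar embedding $\Gamma$. I would place two translated copies of $\Gamma$ in the parallel planes $z=0$ and $z=1$: map $(u,0)$ to $(\Gamma(u),0)$ and $(u,1)$ to $(\Gamma(u),1)$. Then every edge $(u,i)(v,i)$ inherits the unit length it had in $\Gamma$, while every matching edge $(u,0)(u,1)$ becomes a vertical unit segment along the $z$-axis. This realizes the prism as a 3d partial grid.

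For the backward direction, the key structural fact is that, for each edge $uv\in E(G)$, the four prism edges $(u,0)(v,0)$, $(v,0)(v,1)$, $(v,1)(u,1)$, $(u,1)(u,0)$ form a simple $4$-cycle all of whose edges have unit length in any embedding. I would first establish the elementary lemma that a simple unit $4$-cycle in a 3d grid is necessarily a \emph{unit square} contained in one of the coordinate planes. This reduces to observing that the four consecutive displacement vectors lie in $\{\pm e_1,\pm e_2,\pm e_3\}$ and sum to zero, so they must balance within each axis; the only way to avoid an immediate backtrack (which would merge two vertices and violate simplicity) is to use two perpendicular axes once each with both signs, yielding a unit square. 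In particular the two matching edges $(u,0)(u,1)$ and $(v,0)(v,1)$ are opposite sides of this square and hence \emph{parallel}, while the two copies $(u,0)(v,0)$ and $(u,1)(v,1)$ are perpendicular to them.

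With this in hand, I would let $d_u$ denote the unit direction of the matching edge at $u$. The square rigidity gives $d_u=d_v$ whenever $uv\in E(G)$, so by connectedness of $G$ all matching edges point in a common direction, which I may take to be $e_3$ after a rotation. Then, again by the square property, every edge $(u,0)(v,0)$ of the bottom copy $G_0$ is perpendicular to $e_3$, forcing $(u,0)$ and $(v,0)$ to share the same $z$-coordinate; connectedness then places all of $G_0$ in a single plane $z=c$. Restricting the embedding to this plane yields a unit-length embedding of $G_0\cong G$ in a 2d grid, so $G$ is a partial grid. The main obstacle I expect is the $4$-cycle lemma: making precise that simplicity (four distinct vertices) rules out the degenerate collinear and back-and-forth configurations and forces planarity. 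Everything after that is a routine propagation of this local constraint through the connected graph $G$, so the proof hinges on getting the local rigidity statement exactly right.
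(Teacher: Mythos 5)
Your proposal is correct and follows essentially the same route as the paper: the forward direction by stacking two copies of the planar embedding on adjacent parallel planes, and the converse by exploiting the rigidity of the unit $C_4$'s (one per edge of $G$, sharing the matching edges) to force the two copies of $G$ onto parallel planes. The paper states this rigidity in a single sentence, whereas you supply the supporting details---the lemma that a simple unit $4$-cycle in a 3d grid must be a planar unit square, and the propagation of the common matching-edge direction via connectedness---which is exactly the argument the paper leaves implicit.
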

\begin{proof}
If $G$ is a partial grid, then the prism of $G$ is a 3d partial grid because the three dimensional embedding can be obtained by placing two copies of the two-dimensional embedding on adjacent parallel planes. To prove the converse, we note that the $C_4$ graphs sharing opposing edges form a rigid structure that can only be embedded in a 3d grid space as two copies of $G$ on parallel planes.
\end{proof}

The previous theorem, along with the superset property, can leverage our previous (two-dimensional) results to show that $32$ out of the $63$ nonempty subsets of $\{1,2,3,4,5,6\}$ are NP-complete.

Given a set $D$ and a positive integer $k$, we define $D+k = \{d+k : d \in D\}$.

\begin{cor} \label{cor:npcprism}
If \textsc{PGR} is NP-complete for $D$-graphs then 3d-\textsc{PGR} is NP-complete for $(D+1)$-graphs.
\end{cor}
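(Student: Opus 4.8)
The plan is to give a polynomial-time reduction from \textsc{PGR} restricted to $D$-graphs---which is NP-complete by hypothesis---to 3d-\textsc{PGR} restricted to $(D+1)$-graphs, using the prism operator as the transformation, and with Theorem~\ref{thm:prism} supplying correctness for free. Since membership of 3d-\textsc{PGR} in NP is immediate (a purported 3d embedding can be checked for soundness in polynomial time, exactly as in the two-dimensional case), it suffices to establish NP-hardness for $(D+1)$-graphs. So the whole argument reduces to verifying that the prism is both a valid and a correctness-preserving transformation.

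First I would take an arbitrary $D$-graph $G$ as the input instance and construct its prism $H$. This construction is clearly computable in polynomial (indeed linear) time, since $H$ has exactly $2|V(G)|$ vertices and $2|E(G)|+|V(G)|$ edges. The one step that genuinely needs checking---and the crux of the corollary---is the degree bookkeeping showing that $H$ is a $(D+1)$-graph. I would argue this directly from the definition: fix a vertex $(u,i)$ of $H$; by clause (i) of the prism definition it is adjacent to the single vertex $(u,1-i)$, and by clause (ii) it is adjacent to $(v,i)$ for each of the $d_G(u)$ neighbours $v$ of $u$ in $G$, and to no other vertex. Hence $d_H\bigl((u,i)\bigr)=d_G(u)+1$. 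As every vertex of $G$ has degree in $D$, every vertex of $H$ has degree in $D+1$, so $H$ is a legitimate input of 3d-\textsc{PGR} restricted to $(D+1)$-graphs.

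Finally, correctness follows immediately from Theorem~\ref{thm:prism}: $G$ is a partial grid if and only if its prism $H$ is a 3d partial grid. Chaining this equivalence with the polynomial-time construction and the degree computation above shows that 3d-\textsc{PGR} is NP-hard, and therefore NP-complete, for $(D+1)$-graphs. I do not expect a real obstacle here; the only content of the corollary is the single observation that the prism raises every vertex degree by exactly one, all the embedding-theoretic work having already been carried out in Theorem~\ref{thm:prism}.
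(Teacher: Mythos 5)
Your proposal is correct and follows exactly the paper's own argument: NP membership via checking a given embedding, NP-hardness via the prism construction as the reduction, the observation that the prism of a $D$-graph is a $(D+1)$-graph, and correctness from Theorem~\ref{thm:prism}. The only difference is that you spell out the degree computation $d_H\bigl((u,i)\bigr)=d_G(u)+1$ explicitly, which the paper leaves as a one-line remark.
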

\begin{proof}
The problem is clearly in NP, since the embedding provides a polynomial certificate. To prove the NP-hardness we reduce \textsc{PGR} to 3d-\textsc{PGR} using the prism graph. Note that if $G$ is a $D$-graph, then the prism of $G$ is a $(D+1)$-graph. Correctness follows from Theorem~\ref{thm:prism}.
\end{proof}

Next, we present an extension of Corollary~\ref{cor:npcprism}, which proves the NP-{com\-plete\-ness} of $7$ additional subsets.

\begin{cor} \label{cor:npchairyprism}
If \textsc{PGR} is NP-complete for $(D_1 \cup D_2)$-graphs then 3d-\textsc{PGR} is NP-complete for $(\{1\} \cup (D_1+1) \cup (D_2+2))$-graphs.
\end{cor}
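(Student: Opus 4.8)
The plan is to mimic the reduction behind Corollary~\ref{cor:npcprism}, but to augment the prism with pendant vertices (``hairs'') so that selected vertex degrees are boosted by two rather than one. Given a $(D_1 \cup D_2)$-graph $G$, I would build a \emph{hairy prism} $P^\ast(G)$ as follows: start from the prism of $G$, and then, for every vertex $u$ of $G$, if $d_G(u) \in D_1$ leave its two copies $(u,0),(u,1)$ untouched, while if $d_G(u) \in D_2 \setminus D_1$ attach one new degree-$1$ vertex to each of $(u,0)$ and $(u,1)$. Since each copy $(u,i)$ already has degree $d_G(u)+1$ in the plain prism, this yields degree $d_G(u)+1 \in D_1+1$ for the untouched copies, degree $d_G(u)+2 \in D_2+2$ for the hairy copies, and degree $1$ for every hair; hence $P^\ast(G)$ is a $(\{1\} \cup (D_1+1) \cup (D_2+2))$-graph, and the construction is clearly linear-time. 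Membership of 3d-\textsc{PGR} in NP is immediate, since an embedding is a polynomial certificate, so it remains to prove that $G$ is a partial grid if and only if $P^\ast(G)$ is a 3d partial grid.

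For the forward direction I would take any unit-length embedding of $G$ in the plane, place two copies of it on the planes $z=0$ and $z=1$ to realise the prism (exactly as in Theorem~\ref{thm:prism}), and then route every hair along the vertical axis into the empty planes $z=-1$ and $z=2$: the hair attached to $(u,0)$ goes to $(x_u,y_u,-1)$ and the hair attached to $(u,1)$ goes to $(x_u,y_u,2)$, where $(x_u,y_u)$ is the position of $u$ in the planar embedding. This is always legal because each copy $(u,0)$ spends its $+z$ slot on $(u,1)$ and at most its four planar slots on the neighbours inside its own copy, so the $-z$ slot is free (symmetrically $(u,1)$ has its $+z$ slot free); moreover distinct vertices occupy distinct columns $(x_u,y_u)$, so hairs placed in the plane $z=-1$ (resp.\ $z=2$) never collide with one another, and being at $z=-1,2$ they cannot collide with the prism, which lives entirely in $z\in\{0,1\}$.

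The converse is where the earlier machinery pays off, and it is in fact immediate. Deleting all hair vertices from $P^\ast(G)$ leaves exactly the prism of $G$, so the prism of $G$ is a subgraph of $P^\ast(G)$. Any subgraph of a 3d partial grid is again a 3d partial grid (restrict the ambient grid embedding), so if $P^\ast(G)$ is a 3d partial grid then so is the prism of $G$; Theorem~\ref{thm:prism} then forces $G$ to be a partial grid. Correctness of the reduction follows, and since \textsc{PGR} is NP-complete for $(D_1 \cup D_2)$-graphs by hypothesis, 3d-\textsc{PGR} is NP-complete for $(\{1\} \cup (D_1+1) \cup (D_2+2))$-graphs.

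The only step demanding genuine care is the hair-placement of the forward direction: one must certify both that a free grid direction exists at every prism vertex and that no two hairs overlap. Confining all hairs to the two fresh planes $z=-1$ and $z=2$ settles both issues simultaneously, since those planes are untouched by the prism and the vertical hair edges cannot clash with the purely horizontal edges of either planar copy of $G$. I would expect any subtlety in a sloppy write-up to hide precisely here, so I would spell out the $\pm z$ availability argument and the column-distinctness of the hairs explicitly.
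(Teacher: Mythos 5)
Your proposal is correct and follows exactly the construction the paper intends: the paper's (one-line) proof is precisely ``the prism construction with new vertices of degree $1$ appended to the vertices with degree in $D_2$,'' and your hairy prism $P^\ast(G)$, the degree bookkeeping, the placement of hairs in the planes $z=-1$ and $z=2$, and the converse via the subgraph property plus Theorem~\ref{thm:prism} are the details the paper leaves implicit. In effect you have written out the full argument the paper compresses into a single sentence.
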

\begin{proof}
The proof follows from the prism construction with new vertices of degree $1$ appended to the vertices with degree in $D_2$.
\end{proof}

Graphs with degree at most $2$ are trivial. The following theorem is analogous to Theorem~\ref{thm34} and shows that the problem is polynomial for graphs where all vertices have degree $4$ and above.

\begin{thm} \label{thm:degreeatmost3}
A 3d partial grid has some vertex of degree at most $3$.
\end{thm}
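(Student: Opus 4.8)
The plan is to mimic the extremal-vertex argument used in the proof of Theorem~\ref{thm34}, lifted from two to three dimensions. There the key move was to pick ``the topmost vertex in the leftmost column'' and observe that all of its neighbors must lie below or to the right of it, so that at most two of the four possible grid directions are available. In three dimensions each grid vertex has up to six neighbors, one in each of the directions $\pm x$, $\pm y$, $\pm z$, and I will show that a suitably chosen extremal vertex loses exactly the three negative directions, leaving at most three neighbors.

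Concretely, I would first fix an arbitrary 3d embedding $\Gamma$ of the given 3d partial grid $G$, so that every vertex of $G$ is identified with a point of $\mathbb{Z}^3$. Next I would impose the lexicographic total order on these points: given two distinct image points, compare their $x$-coordinates first, then (in case of a tie) their $y$-coordinates, then their $z$-coordinates. Let $v=(a,b,c)$ be the image point that is minimal under this order; such a minimum exists because the embedding uses finitely many points.

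The crucial step is the observation that the three ``negative-direction'' grid neighbors of $v$, namely $(a-1,b,c)$, $(a,b-1,c)$ and $(a,b,c-1)$, are each strictly smaller than $v$ in the lexicographic order, since decreasing a single coordinate while keeping the earlier coordinates fixed strictly lowers the tuple. By minimality of $v$, none of these three points can be the image of a vertex of $G$, hence none of them is a neighbor of $v$ in $\Gamma$. Consequently every neighbor of $v$ occupies one of the three remaining (positive-direction) points $(a+1,b,c)$, $(a,b+1,c)$, $(a,b,c+1)$, which forces $d_G(v)\le 3$.

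I do not expect any genuine obstacle here: the entire content is the extremality of $v$ together with the elementary fact that the lexicographic order decreases under each negative coordinate step. The only point to state carefully is the choice of tie-breaking order, which guarantees that all three negative neighbors are eliminated simultaneously rather than, say, only one of them; this is exactly the three-dimensional analogue of ``below or to the right'' and makes the bound $3$ (rather than $5$) come out immediately.
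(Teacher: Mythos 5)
Your proof is correct and is essentially the paper's own argument: the paper picks ``the topmost vertex in the leftmost column of the front most plane,'' which is exactly your lexicographically minimal vertex, and concludes it has at most $3$ neighbors. Your version merely formalizes the tie-breaking order explicitly and argues directly rather than by contradiction, which is a cosmetic difference.
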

\begin{proof}
Suppose there is a unit-length embedding $\Gamma$ for a graph with no vertices of degree 1, 2 or 3. Let $v$ be the topmost vertex in the leftmost column of the front most plane of $\Gamma$. Vertex $v$ can have at most 3 neighbors, a contradiction.
\end{proof}

The three-dimensional version of $\{1,6\}$-graphs can be decided polynomially.

\begin{thm}
A \{1,6\}-graph is a 3d partial grid if and only if its degree-6 vertices induce a 3d grid. Thus, 3d-\textsc{PGR} is polynomial for \{1,6\}-graphs.
\end{thm}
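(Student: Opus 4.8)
The plan is to follow the same two-part scheme as Theorem~\ref{thm14}, replacing the planar ``unit-area square'' arguments by their three-dimensional counterparts. Throughout, let $G$ be a connected \{1,6\}-graph and let $G'$ be the subgraph of $G$ induced by its degree-$6$ vertices. As in the \{1,4\} case, I would first record two easy structural facts: the vertices of $G\setminus G'$ are leaves, so deleting them cannot disconnect $G$, whence $G'$ is connected; and $G'$, being a subgraph of the partial grid $G$, is itself a 3d partial grid. The degenerate case in which $G$ has no degree-$6$ vertex (so $G=K_2$, trivially a partial grid) I would dispatch separately.

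For the ``if'' direction, suppose $G'$ is (isomorphic to) the 3d grid $G_{K\times L\times M}$ and place its vertices at the integer points of a $K\times L\times M$ box. Each degree-$6$ vertex $v$ has exactly $6-d_{G'}(v)$ neighbours of degree $1$ in $G$, and in the box embedding the $d_{G'}(v)$ coordinate directions occupied by its $G'$-neighbours are precisely the box-interior directions; I would attach each leaf of $v$ at the grid point one step away from $v$ along one of the remaining, outward-pointing directions. Every such external point is adjacent to exactly one box vertex, so no two leaves collide and no leaf lands on an occupied point; the resulting drawing realises all of $G$, proving it a 3d partial grid.

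For the ``only if'' direction I would isolate the key local fact, the analogue of the square-completion step used for \{1,4\}-graphs: if, in an embedding of $G$, two degree-$6$ vertices $u,v$ occupy diagonally opposite corners of an axis-aligned unit square, then the other two corners $s,t$ are grid-neighbours of both $u$ and $v$; since $u$ and $v$ have degree $6$, all six grid-neighbours of each are endpoints of its edges, so $s,t$ are occupied by vertices of $G$ with $us,vs,ut,vt\in E(G)$, giving $\deg_G(s),\deg_G(t)\ge 2$, hence $=6$, hence $s,t\in V(G')$. Thus no axis-aligned unit square can have a diagonal pair of corners in $G'$ without having all four corners in $G'$ (a ``no reentrant corner'' condition holding simultaneously in all three coordinate planes). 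I would then argue that a connected 3d partial grid satisfying this condition must fill a solid rectangular box, so that $G'$ occupies every point of a $K\times L\times M$ box; since each degree-$6$ vertex is joined in $G$ to all six of its grid-neighbours, the induced edges are exactly those of $G_{K\times L\times M}$, i.e.\ $G'$ induces a 3d grid.

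The hard part will be the last step: showing that connectivity together with the completion condition forces $G'$ to be a full box. In two dimensions this amounts to the intuitively clear claim (asserted in the proof of Theorem~\ref{thm14}) that a connected non-grid partial grid must exhibit an incomplete unit square; in three dimensions it is more delicate, because the condition must be exploited across all three families of axis-aligned unit squares and one must also rule out internal voids and non-convexities. I would handle it by taking the bounding box of the embedded $G'$ and, if some interior or boundary point were unoccupied, locating near the occupied/unoccupied interface a unit square with a diagonal pair in $G'$ but a missing corner, contradicting the completion fact. Finally, polynomial recognition follows exactly as for \{1,4\}-graphs: identify the box from the degree-$6$ vertices having few degree-$6$ neighbours (the prospective faces, edges and corners of the box), then fill the interior greedily, each newly placed vertex being forced as the unique fourth corner of a unit square already carrying three of $G'$'s corners, rejecting if at any point the forced completion is impossible.
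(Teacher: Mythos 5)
Your proposal is correct and follows the same overall scheme as the paper, whose own proof is only a three-sentence sketch declaring the argument ``analogous'' to the \{1,4\} case of Theorem~\ref{thm14}: both directions, the leaf-attachment embedding and the contradiction obtained by completing an incomplete local structure forced by the degree-6 vertices, are exactly what the paper intends. The one substantive difference is the local structure you complete. The paper pivots on an \emph{incomplete unit-volume cube} (``one without all 12 edges, but with at least 3 vertices not on the same face''), whereas you use axis-aligned unit squares in all three coordinate-plane orientations together with the diagonal-completion fact. Your choice is arguably the more robust one: if the embedded image of $G'$ lies entirely in a single plane (a flat, non-rectangular configuration), then every unit cube has all of its occupied corners on one face, so the cube postulated by the paper does not exist and its sketch, read literally, misses that case; your square-completion fact applies to it verbatim. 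Both you and the paper leave the global step---that connectivity plus local completion forces $G'$ to fill a solid $K\times L\times M$ box---essentially at the level of assertion; you at least isolate it honestly as the hard part, and the claim is true and provable along the lines you suggest (for instance, by induction along a path in $G'$: whenever $p,q\in G'$ are joined by a path, the entire combinatorial box spanned by $p$ and $q$ must be occupied, the square-completion fact being used to propagate an occupied face one layer at a time). So there is no gap relative to the paper's own standard of rigor, and your unit-square formulation is, if anything, the cleaner route to the same result.
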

\begin{proof}
The proof is analogous to that for \{1,4\}-graphs in the two-dimensional case. Here, if we suppose that a \{1,6\}-graph $G$ is a 3d partial grid but its degree-6 vertices induce a graph which is certainly a partial 3d grid but not actually a 3d grid, then the mandatory existence of an incomplete unit-volume cube on its 3d embedding (one without all 12 edges, but with at least 3 vertices not on the same face) will lead to a similar contradiction.
\end{proof}

\section{Conclusion and open problems}\label{secConclusion}

Table~\ref{figComplexityDichotomy} gives the full dichotomy into polynomial and NP-complete for the recognition of (two-dimensional) partial grids. Previous results are duly referenced. Note that, for every degree set $D \supseteq \{1\}$, the complexity classes for $D$-graphs and $D$-trees match. It is also noteworthy that the results herein obtained are sufficient to show that the problem remains NP-complete even when a consistent orientation for the input graph is provided.

\begin{table}
\begin{center}

\begin{tabular}{ccc}

\begin{tabular}{ccc}

~~$D$~~&~~$D$-graphs~~&~~$D$-trees~~\\

\hline

\{1\} & P & P \\
\{2\} & P & --- \\
\{3\} & P & --- \\
\{4\} & P & --- \\
\{1,2\} & \textbf{P} & P \\
\{1,3\} & NPC & \textbf{NPC} \\
\{1,4\} & \textbf{P} & P \\
\{2,3\} & \textbf{NPC} & --- \\


\end{tabular}

&~&

\begin{tabular}{ccc}

~$D$~&~$D$-graphs~&~$D$-trees~\\

\hline

\{2,4\} & \textbf{NPC} & --- \\
\{3,4\} & \textbf{P} & --- \\
\{1,2,3\} & NPC~\cite{IPL89} & \textbf{NPC}~\cite{IPL89} \\
\{1,2,4\} & NPC~\cite{IPL87} & \textbf{NPC}~\cite{IPL87} \\
\{1,3,4\} & NPC & NPC \\
\{2,3,4\} & NPC & --- \\
\{1,2,3,4\} & NPC~\cite{IPL87} & NPC~\cite{IPL87} \\
~~~&~~~&~~~\\


\end{tabular} \\

\end{tabular}

\caption{\label{figComplexityDichotomy}Full complexity dichotomy for \textsc{PGR} (NPC: NP-complete; P: polynomial; ---: the corresponding input does not exist). Bold letters indicate the base cases, wherefrom the other cases derived (by the superset/subset property).}

\end{center}
\end{table}

A natural question concerns the existence of \emph{robust} gadgets. A robust gadget $R$ \emph{always} preserves the immersibility of the original graph $G$, when the vertices of $G$ are replaced by copies of $R$. The gadgets introduced herein, while sufficient for the intended proofs, do not guarantee that the immersibility of the original graph $G$ is preserved when a consistent orientation of $G$ is unknown. The graph shown in Figure~\ref{figWindmill}(a), called the \emph{windmill} graph, is one such robust gadget.\footnote{Indeed the windmill could perfectly have been used to prove the NP-completeness of \textsc{PGR} for \{1,3,4\}-trees, had that result not come as a byproduct of the \{1,3\}- case (by the superset property).} Since each of the windmill ``arms''---one of which is highlighted in Figure~\ref{figWindmill}(a)---are independently tied to the windmill ``axis''---its \emph{center}---by an edge, it is possible that they interchange their positions so to allow for any desired circular permutation of the gadget's interconnectors. Consequently, the windmill tree does not impose any fixed, predefined positioning of the neighborhood of each vertex being replaced, and the preservation of the original graph's immersibility is guaranteed. The proposed question asks whether or not there exist robust gadgets for degree sets other than the windmill's \{1,3,4\}.

\begin{figure}
\begin{center}

\setlength{\unitlength}{1mm}

\includegraphics[scale=0.69]{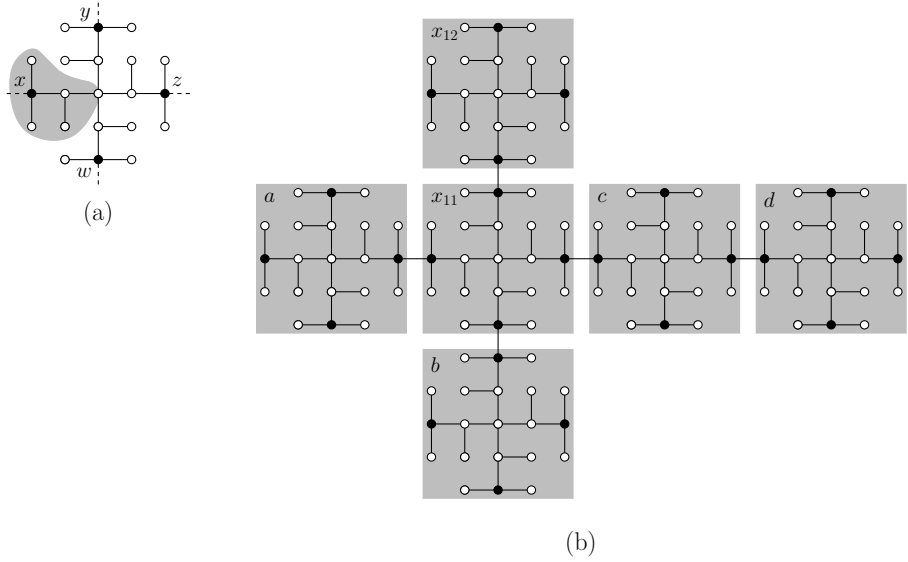}

\caption{\label{figWindmill}(a) The \{1,3,4\} gadget (windmill tree). (b) Windmill substitution.}

\end{center}
\end{figure}

Another question worth considering is how the complexities get affected by allowing edges with length up to $k > 1$.

Finally, completing the complexity dichotomy for the three-dimensional case (given in Table~\ref{fig3dComplexityDichotomy}) is a challenging problem, due to the rising number of applications employing three-dimensional layouts and to its intriguing theoretical appeal. In particular, so far we do not know of a complexity-separating degree set $D$ for which 3d-\textsc{PGR} is polynomial for $D$-trees but NP-complete for $D$-graphs.

\begin{table}
\begin{center}
\begin{tabular}{c|cccccccc}
 & $\emptyset$ & $\{4\}$ & $\{5\}$ & $\{6\}$ & $\{4,5\}$ & $\{4,6\}$ &
$\{5,6\}$ & $\{4,5,6\}$\\
\hline
$\emptyset$ & --- & P & P & P & P & P & P & \textbf{P}\\
$\{1\}$ & P & ? & ? & \textbf{P} & \textbf{NPC}$^2$ & \textbf{NPC}$^2$ & ? & NPC$^2$\\
$\{2\}$ & P & \textbf{NPC}$^1$ & ? & ? & NPC$^1$ & NPC$^1$ & ? & NPC$^1$\\
$\{3\}$ & ? & \textbf{NPC}$^1$ & \textbf{NPC}$^1$ & ? & NPC$^1$ & NPC$^1$ & NPC$^1$ & NPC$^1$\\
$\{1,2\}$ & \textbf{P} & NPC$^1$ & \textbf{NPC}$^2$ & ? & NPC$^1$ & NPC$^1$ & NPC$^2$ & NPC$^1$\\
$\{1,3\}$ & ? & NPC$^1$ & NPC$^1$ & \textbf{NPC}$^2$ & NPC$^1$ & NPC$^1$ & NPC$^1$ & NPC$^1$\\
$\{2,3\}$ & ? & NPC$^1$ & NPC$^1$ & ? & NPC$^1$ & NPC$^1$ & NPC$^1$ & NPC$^1$\\
$\{1,2,3\}$ & ? & NPC$^1$ & NPC$^1$ & NPC$^2$ & NPC$^1$ & NPC$^1$ & NPC$^1$ & NPC$^1$\\
\end{tabular}

\caption{\label{fig3dComplexityDichotomy}Known complexity dichotomy for the three-dimensional case (NPC$^1$: NP-complete due to Corollary~\ref{cor:npcprism}; NPC$^2$: NP-complete due to Corollary~\ref{cor:npchairyprism}; P: polynomial; ?: open case; ---: the corresponding input does not exist). Each cell states the complexity of 3d-\textsc{PGR} restricted to $D$-graphs, where $D$ is the union of the sets associated to the column and the row that contain the cell. Bold letters indicate the base cases, wherefrom the other cases derive (by the superset/subset property).}

\end{center}
\end{table}

\section*{Acknowledgements}
The authors would like to thank professors Lucia Draque Penso and Dieter Rautenbach for the insightful discussions.

\bibliographystyle{amsplain}

\begin{thebibliography}{10}

\bibitem{IPL87}
S.~N.~Bhatt, S.~S.~Cosmadakis,
The complexity of minimizing wire lengths in VLSI layouts.
Inform.~Process.~Lett. 25 (1987), 263--267.

\bibitem{graphclass}
A.~Brandst\"adt, V.B.~Le, T.~Szymczak, F.~Siegemund, H.N.~de Ridder, S.~Knorr,
M.~Rzehak, M.~Mowitz, N.~Ryabova, U.~Nagel.
Information system on graph class inclusions.
WWW document at \url{http://wwwteo.informatik.uni-rostock.de/isgci/about.html}, last visited May 2010.

\bibitem{CM05}
C.~N.~Campos, C.~P.~Mello,
The total chromatic number of some bipartite graphs.
Proc.~7th International Colloquium on Graph Theory,
557--561, Electron. Notes Discrete Math., 22, 2005.
Full paper in Ars Combin.~88 (2008), 335--347.

\bibitem{DANG08}
B.~Dang, S.~L.~Wright, P.~S. Andry, E.~Sprogis, C.~K.~Tsang, M.~Interrante, C.~Webb, R.~Polastre, R.~Horton, K.~Sakuma, C.~S.~Patel, A.~Sharma, J.~Zhen, J.~U.~Knickerbocker,
3d chip stacking with C4 technology.
IBM J.~Res.~\&~Dev.~52 (2008), 599--609.

\bibitem{BETT99}
G.~Di Battista, P.~Eades, R.~Tamassia, I.~G.~Tollis,
Graph Drawing -- Algorithms for the Visualization of Graphs
(Prentice Hall, 1999).

\bibitem{EW96}
P.~Eades, S.~Whitesides,
The realization problem for Euclidean minimum
spanning trees is NP-hard.
Algorithmica 16 (1996), 60--82.

\bibitem{EW96b}
P. Eades, S. Whitesides,
The logic engine and the realization problem
for nearest neighbor graphs.
Theoret.~Comput.~Sci.~169 (1996), 23--37.

\bibitem{IPL89}
A.~Gregori,
Unit-length embedding of binary trees on a square grid.
Inform.~Process.~Lett.~31 (1989), 167--173.

\bibitem{LMPPS03}
Y.-B.~Lin, Z.~Miller, M.~Perkel, D.~Pritikin, I.~H.~Sudborough,
Expansion of layouts of complete binary trees into grids.
Discrete Appl.~Math.~131 (2003), 611--642.

\bibitem{SAKUMA08}
K.~Sakuma, P.~S.~Andry, B.~Dang, C.~K.~Tsang, C.~Patel, S.~L.~Wright, B.~Webb, J.~Maria, E.~Sprogis, S.~K.~Kang, R.~Polastre, R.~Horton, J.~U.~Knickerbocker,
3d chip stacking technology with through-silicon vias and low-volume lead-free interconnections.
IBM J.~Res.~\&~Dev.~52 (2008), 611--622.

\bibitem{SAKUMA09}
K.~Sakuma, N.~Nagai, M.~Saito, J.~Mizuno, S.~Shoji,
Simplified 20-$\mu$m pitch vertical interconnection process for 3d chip stacking.
IEEJ Trans.~Elec.~Electron.~Eng.~4 (2009), 339--344.

\bibitem{Ull}
J.~D.~Ullman, Computational Aspects of VLSI (Addison-Wesley, 1984).

\end{thebibliography}

\end{document}